\newtheorem{theorem}{Theorem}[section]
\newtheorem{lemma}{Lemma}[section]
\title{Only Classical Parameterised States have Optimal Measurements under Least Squares Loss}
\author[1,2,*]{Wilfred Salmon}
\author[1]{Sergii Strelchuk}
\author[2]{David R. M. Arvidsson-Shukur}
\affil[1]{DAMTP, Centre for Mathematical Sciences, University of Cambridge, Cambridge CB30WA, UK}
\affil[2]{Hitachi Cambridge Laboratory, J. J. Thomson Avenue, CB3 0HE, Cambridge, UK}
\affil[*]{was29@cam.ac.uk}
\def\th{\ensuremath{\hat{\theta}}}
\def\ttd{\ensuremath{\tilde{\theta}}}
\def\tb{\ensuremath{\hat{\theta}^\text{B}}}
\def\Tr{\ensuremath{\text{Tr}}}
\def\st{\ensuremath{\text{s.t.}}}
\def\rb{\ensuremath{\bar{\rho}}}
\def\Kperp{\ensuremath{{K^\perp}}}
\begin{document}

\maketitle

\begin{abstract}
Measurements of quantum states form a key component in quantum-information processing. It is therefore an important task to compare measurements and furthermore decide if a measurement strategy is optimal. Entropic quantities, such as the quantum Fisher information, capture asymptotic optimality but not optimality with finite resources. We introduce a framework that allows one to conclusively establish if a measurement is optimal in the non-asymptotic regime. Our method relies on the fundamental property of expected errors of estimators, known as risk, and it does not involve optimisation over entropic quantities. The framework applies to finite sample sizes and lack of prior knowledge, as well as to the asymptotic and Bayesian settings. We prove a no-go theorem that shows that only classical states admit optimal measurements under the most common choice of error measurement: least squares. We further consider the less restrictive notion of an approximately optimal measurement and give sufficient conditions for such measurements to exist. Finally, we generalise the notion of when an estimator is inadmissible (i.e. strictly worse than an alternative), and provide two sufficient conditions for a measurement to be inadmissible.
\end{abstract}

\section{Introduction}
Extracting information from systems requires performing a measurement. The outcome of the measurement is then used to infer information about some property of the system of interest. Parameter estimation is the well-established study of different strategies of measurement and the subsequent inference of some number of parameters of a system. In particular, when the underlying system is quantum then understanding the non-classical features of quantum measurement becomes of paramount importance and forms the central object of study in quantum metrology. 

The asymptotic regime of such problems, corresponding to an infinite number of measurements, has been extensively studied \cite{Alba2020, Gore2022}. This regime is also known as local estimation, as one only considers infinitesimal changes in parameter values. The quantum Fisher information \cite{Brau1994, Alba2020} has emerged as the standard metric of asymptotic information for a parametrised quantum state. Its operational significance is emphasised by the quantum Cram{\'e}r-Rao bound \cite{Brau1994, Alba2020} where its inverse lower-bounds the variance of locally unbiased estimators. However, asymptotically optimal measurements, as characterised by the quantum Fisher information, may be sub-optimal in finite-resource settings (see below).
Additionally, in practice, parameter estimation is a global problem; one must consider parameters across the whole parameter space. Global estimation is often studied within the Bayesian regime, where one has some \textit{a priori} belief about the distribution of the parameter. Optimal measurement choices in the Bayesian regime are well understood \cite{Demk2020, Fuch2009}, however their optimality relies on an accurate choice of prior. To the best of our knowledge, a thorough analysis of measurement optimality in the most general setting of global parameter estimation without prior knowledge is missing. 

In this paper, we provide such an analysis. Our methodology relies on the fundamental property of expected errors of estimators, known as risk, defined below. We establish a fundamental result of quantum-information processing: \textit{only} classical parametrised states \cite{Alba2020}, those that are a classical mixture of some parameter-independent pure states, admit optimal measurements. We go on to propose three ways in which a measurement can be approximately optimal, and we prove that if a state is close to being classical, there exists an approximately optimal measurement. Finally, we consider when one measurement dominates (is strictly better than) another measurement. We present two ``bad" classes of measurements---those that extract no information from a system and those that can be fine-grained. We show that these measurements are always dominated by suitable different measurements, which we construct.

\section{Technical Background}
We begin by reviewing classical parameter estimation \cite{Lehm1998}. As an example, suppose one wishes to conduct an experiment to estimate $q\in[0,1]$, some unknown proportion of a population satisfying some property. We sample some subset of $n$ members of the population and observe a realisation of a binomially distributed random variable $X\sim B(n,q)$. We do not, \textit{a priori}, know $X$'s distribution, only that it will be binomially distributed according to the parameter $q$. Based off a realisation $x$ of $X$, one gives an estimate of $q$, most commonly $x/n$. In the general setting, one wishes to estimate some parameter $\theta$, which could be any of the elements of the parameter space $\Theta$. We observe the outcome of some random variable $X$, whose distribution is unknown to us, but is contained in the set $\{P_\theta : \theta\in\Theta\}$. An estimator of $\theta$ is then a function $\th(x)$, where $\th(x)$ encodes our estimate of the underlying parameter $\theta$ in the case that $X$ takes the value $x$.

To discuss the merit of estimators, one introduces a loss function $L$, where $L(\theta_1,\theta_2)$ quantifies how bad a guess of $\theta_1$ if the true parameter were $\theta_2$. Common examples include least squares $L(\theta_1,\theta_2) = ||\theta_1-\theta_2||^2$, the Kullback-Leibler divergence \cite{Kull1951} and the Bhattacharyya distance \cite{Bhat1946}. The loss function induces a risk function for an estimator:
\begin{equation}
R(\th,\theta) = \mathbb{E}_{X\sim P_\theta}[L(\th(X),\theta)].
\end{equation}
$R(\th,\theta)$ measures how badly one expects an estimator to perform for a fixed value of the underlying parameter $\theta$. An estimator is ``good" at the point $\theta$ if it has a low risk there. We say that $\th$ dominates $\ttd$ if $\th$ is always at least as good as $\ttd$ and sometimes better, i.e.,
\begin{align}
&\forall \theta\in\Theta,\; R(\th, \theta)\leq R(\ttd, \theta). \notag\\
\text{and}\;\; &\exists \theta\in\Theta\;\st \; R(\th,\theta)<R(\ttd, \theta). \label{eqn:dominates}
\end{align}
If $\ttd$ is dominated by some estimator $\th$ one says that it is inadmissible, otherwise it is said to be admissible. Admissibility is a very weak condition; constant estimators---always guessing $\theta_0$ regardless of the outcome of the experiment---are often admissible \cite{Wass2004}, since they have zero risk at $\theta_0$. However, admissibility is certainly desirable, or, by definition, one can choose a nominally ``better" estimator.

In the quantum setting \cite{Hels1969, Giov2011}, our parameter is encoded in some quantum state $\rho(\theta)$. Often, one may have $n$ copies of the same state $\sigma(\theta)$ in which case $\rho(\theta)=\sigma(\theta)^{\otimes n}$. We choose some (parameter independent) generalised measurement $M = \{M_k\}$, where each $M_k$ is a linear operator satisfying $M_k\geq 0$, and $\sum_k M_k = \mathbbm{1}$. This gives rise to a probability distribution parametrised by $\theta$:
\begin{equation}\label{eqn:prob_distr}
p_k^{(M)}(\theta) = \Tr[\rho(\theta)M_k].
\end{equation}
Thus, after measurement, the quantum experiment reduces to a classical parameter-estimation problem and one can pick an estimator $\th_M(k)$ using the probability distribution in Eq. \eqref{eqn:prob_distr}. Hence, the objects of interest in quantum parameter estimation are pairs $(M,\th_M)$, where $M$ is a generalised measurement and $\th_M$ is an estimator based off the outcome of $M$. Note that an experimenter may choose a POVM without any particular estimator $\th_M$ in mind. However when the experimental data is eventually used to estimate the parameter, the way in which the data is processed exactly corresponds to picking an estimator. Often, when it is clear from context, we will omit the $M$ label of the estimator. We can compare the performance of estimators based on different measurements. In particular, for two pairs $(M,\th_M)$ and $(F,\ttd_F)$ we write $\th_M\leq\ttd_F$ to mean that $\th_M$ is always at least as good as $\th_F$, i.e. $\forall \theta\in\Theta,\; R(\th_M,\theta)\leq R(\ttd_F,\theta)$. We say that $\th_M$ dominates $\ttd_F$, written $\th_M < \ttd_F$, if $\th_M\leq\ttd_F$ and $\th_M$ is sometimes better, i.e. $\exists \theta\in\Theta\;\st \; R(\th_M,\theta)<R(\ttd_F, \theta)$, in close analogy to equation \eqref{eqn:dominates}.

Before we introduce our notion of measurement optimality, we briefly discuss how the commonly used notion, based on the (quantum) Fisher information $\mathcal{F}(\theta)$ \footnote{For the ``single parameter" case $\Theta\subseteq\mathbb{R}$, the quantum Fisher information $\mathcal{F}(\theta)$ is defined as $\Tr(\rho(\theta)\Lambda(\theta)^2)$. Here, $\Lambda(\theta)$ is the symmetric logarithmic derivative, defined implicitly by the equation $2\partial_\theta \rho(\theta) = \Lambda(\theta)\rho(\theta) + \rho(\theta)\Lambda(\theta)$. For the ``multiparameter case" of $\Theta\subseteq\mathbb{R}^N$ with $N>1$ there are multiple definitions of the quantum Fisher information; for a review see \cite{Alba2020}.}, is, in general, unsatisfactory. $\mathcal{F}(\theta)$  is an entropic quantity that characterises the performance of estimation protocols in the asymptotic limit:  asymptotic risk is lower-bounded by $\Tr(\mathcal{F}(\theta)^{-1})$. We now give two examples where $\mathcal{F}(\theta)$ does not capture optimality. First, consider estimating the mean of three or more normally distributed variables. The most common choice of estimator, the maximum likelihood estimator, has risk everywhere equal to $\Tr(\mathcal{F}(\theta)^{-1})$. Thus, based on the Fisher information, it is asymptotically optimal. However, the James-Stein estimator \cite{Stei1961} always has strictly smaller risk (it dominates the MLE). Secondly, suppose one can query a unitary $U(\theta) =$ diag$\{1, e^{i\theta}\}$ a number $N$ times to estimate $\theta$. 
A strategy that maximises $\mathcal{F}(\theta)$,  applies the unitary $N$ times to the $\ket{+}$ state and measures in the $\{\ket{\pm}\}$ basis \cite{Lee2002}. However, in this scenario, one only receives one bit of information so that the risk of $\th$ does not scale with $N$. In the optimal case, (e.g. the quantum phase estimation algorithm \cite{Neil2010}) the risk scales as $1/N$.\\
 
\section{Optimal Measurements}
Below, we extend the aforementioned framework for estimator comparisons to include quantum measurements. Here, we define what we argue is the only possible such extension, unless further modelling assumptions are made. We begin by extending the notion of admissibility of estimators to measurements.  We say that a generalised measurement $M$ is at least as good as a generalised measurement $F$, written $M\preceq F$ if 
\begin{equation}
\forall \th_F\; \exists \ttd_M\; \st\; \ttd_M\leq\th_F.
\end{equation}
If $M\preceq F$, then if one swaps an experiment's measurement from $F$ to $M$, then whatever estimation strategy $\th_F$ was in use before, one can pick a new one $\th_M$ that is at least as good as $\th_F$ regardless of the true underlying parameter. If, however, $M\npreceq F$, then one cannot do so---by swapping to $M$, there will be at least one estimation strategy $\th_F$ such that whatever new strategy $\th_M$ one picks there are some values of the underlying parameter where one would expect to do worse. Thus, our definition is the natural and only way one can define $M$ as being at least as good as $F$ without further assumptions on the estimation problem. We say that $M$ dominates $F$, written $M\prec F$, if $M\preceq F$, but $F\npreceq M$. In analogue with the classical case, we say that a measurement $M$ is admissible if no other measurement dominates it. Clearly, as in the classical case, admissibility of a measurement is strongly desirable. 

There is then a natural definition of an optimal measurement:  we say that $M$ is optimal if for all other possible measurements $F$, $M\preceq F$. This is the strongest and most general definition of optimality one could make---we emphasise that any other definition must make some additional modelling assumptions (e.g., some \textit{a priori} belief about the unknown parameter). Nonetheless, and somewhat surprisingly, we show that optimal measurements exist for a family of parameter estimation problems.

\begin{theorem}\label{thm:clas_optimal}
	Let $\Theta\subseteq\mathbb{R}^N$ and let $L$ be a choice of loss function that is convex in its first argument. Suppose that $\rho(\theta) = \sum_i p_i(\theta)\dyad{i}$, for some probability distribution $\{p_i(\theta)\}$ and a fixed, $\theta$-independent, basis $\{\ket{i}\}$, then the measurement $M=\{\dyad{i}\}$ is optimal.
\end{theorem}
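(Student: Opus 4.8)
The plan is to show that the data produced by \emph{any} measurement $F$ is just a classical, $\theta$-independent, randomised post-processing of the data produced by $M=\{\dyad{i}\}$, and then to use convexity of $L$ to de-randomise. First I would compute the outcome statistics of $M$: since $\rho(\theta)=\sum_i p_i(\theta)\dyad{i}$, Eq.~\eqref{eqn:prob_distr} gives $p_i^{(M)}(\theta)=\Tr[\rho(\theta)\dyad{i}]=p_i(\theta)$, so the classical problem induced by $M$ is exactly estimating $\theta$ from one sample of $\{p_i(\theta)\}$.

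Next, fix an arbitrary measurement $F=\{F_k\}$ and an arbitrary estimator $\th_F$, and define $q_{k|i}:=\mel{i}{F_k}{i}$. Because $F_k\geq 0$ we have $q_{k|i}\geq 0$, and because $\sum_k F_k=\mathbbm{1}$ we have $\sum_k q_{k|i}=\mel{i}{\mathbbm{1}}{i}=1$; hence for each $i$ the family $\{q_{k|i}\}_k$ is a probability distribution over outcomes $k$ that does not depend on $\theta$. A direct computation then gives $p_k^{(F)}(\theta)=\Tr[\rho(\theta)F_k]=\sum_i p_i(\theta)\,q_{k|i}$, i.e. sampling $F$'s outcome on $\rho(\theta)$ is the same as sampling $i$ from $\{p_i(\theta)\}$ (which is what $M$ does) and then passing it through the fixed stochastic map $q_{k|i}$.

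I would then build the competing estimator for $M$ by averaging: $\ttd_M(i):=\sum_k q_{k|i}\,\th_F(k)$. By convexity of $L$ in its first argument (Jensen's inequality), $L(\ttd_M(i),\theta)\leq\sum_k q_{k|i}\,L(\th_F(k),\theta)$ for every $i$ and every $\theta$. Taking the expectation over $i\sim\{p_i(\theta)\}$ and using $p_k^{(F)}(\theta)=\sum_i p_i(\theta)q_{k|i}$ yields
\begin{equation}
R(\ttd_M,\theta)=\sum_i p_i(\theta)\,L(\ttd_M(i),\theta)\leq\sum_i\sum_k p_i(\theta)q_{k|i}\,L(\th_F(k),\theta)=\sum_k p_k^{(F)}(\theta)\,L(\th_F(k),\theta)=R(\th_F,\theta)
\end{equation}
for all $\theta\in\Theta$, so $\ttd_M\leq\th_F$. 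Since $\th_F$ was arbitrary this shows $M\preceq F$, and since $F$ was arbitrary, $M$ is optimal.

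The only genuine subtlety — the step I would treat most carefully — is that $\ttd_M$ must itself be an admissible \emph{deterministic} estimator: the average $\sum_k q_{k|i}\th_F(k)$ is a convex combination of points of $\mathbb{R}^N$, so one needs estimators to be permitted to take values in $\mathbb{R}^N$ (equivalently, $\Theta$ convex, or the convex hull of the range of $\th_F$ to lie in the allowed target set), which is automatic under the standing assumption $\Theta\subseteq\mathbb{R}^N$ with least-squares-type losses. Everything else is bookkeeping: a countably infinite basis or a continuous outcome set for $F$ just turns the finite sums into absolutely convergent series or integrals, and the interchange of sum and expectation above is justified by non-negativity of $L$.
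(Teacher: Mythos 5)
Your proof is correct and is essentially identical to the paper's: you define the same stochastic matrix $q_{k|i}=\Tr(F_k\dyad{i})$, build the same averaged estimator $\ttd_M(i)=\sum_k q_{k|i}\th_F(k)$, and apply convexity of $L$ in exactly the same way to conclude $M\preceq F$. The extra remarks on the post-processing interpretation and on estimators taking values in $\mathbb{R}^N$ are sensible but do not change the argument.
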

\begin{proof}
	Suppose $F=\{F_k\}$ is some other measurement with an estimator $\ttd_F$. Let $m_{k,i}=\Tr(F_k\dyad{i})$. Note that $m_{k,i}\geq0,\sum_k m_{k,i}=1$ and $\sum_i p_i(\theta)m_{k,i} = p_k^{(F)}(\theta)$. Define an estimator using $M$ by $\th_M(i)=\sum_k m_{k,i}\ttd_F(k)$. Then
	\begin{align}
	R(\th_M,\theta) &= \sum_i p_i(\theta)L\left(\sum_k m_{k,i}\ttd_F(k), \theta\right),\\
	&\leq \sum_{i,k}p_i(\theta)m_{k,i}L(\ttd_F(k),\theta) = R(\ttd_F,\theta).
	\end{align}
	Consequently, by swapping $\ttd_F$ to $\th_M$ we can only decrease the risk and thus $M\preceq F$.
\end{proof}
 
 Note that the parameter may be a vector ($N>1$), so that Theorem \ref{thm:clas_optimal} holds in the multi-parameter setting. States where the eigenbasis of the quantum state may be expressed in a parameter independent way, are called classical parametrised states \cite{Alba2020}. More precisely, a parametrised quantum state $\rho(\theta)$ is called classical if there exists a $\theta$-independent eigenbasis $\{\ket{i}\}$ and probabilities $\{p_i(\theta)\}$ such that $\rho(\theta) =\sum_i p_i(\theta)\dyad{i}$. For example, classical states are relevant for estimating temperature, where the system is in a (thermal) state:

\begin{equation}
\rho(\beta) = \frac{e^{-\beta H}}{\Tr(e^{-\beta H})}.
\end{equation}
Theorem \ref{thm:clas_optimal} shows that  measuring the energy of the system is always an optimal strategy, regardless of the ensuing choice of estimator.\\

As a second example, we consider estimating the strength of depolarizing noise acting on a fixed quantum state. Suppose that depolarising noise acts on some $d$-dimensional pure state $\ket{\psi}$ with probability\footnote{The full range of $p$ for which this is a valid state is $p\in[0,d/(d-1)]$. However, for $p>1$ the state does not represent realistic noise models.} $p\in[0,1]$ to produce the mixed state:
\begin{equation}
	\rho(p)=(1-p)\dyad{\psi}+ p\frac{\mathbbm{1}}{d}.
\end{equation}
Theorem \ref{thm:clas_optimal} shows that measuring in any basis containing $\ket{\psi}$ is optimal for estimating $p$.

It is natural to ask whether there exist any parametrised quantum states that are not classical, but still admit an optimal measurement. We give a partial converse to Theorem \ref{thm:clas_optimal}, in the case that the loss function is least squares.

\begin{restatable}{theorem}{cto}\label{thm:converse_to_optimality}
	Suppose $\rho(\theta)$ is some parameter estimation problem, where $\Theta\subseteq \mathbb{R}^N$ is convex. Unless $\rho(\theta)$ is classical, there is no optimal measurement $M$ under least-squares loss.
\end{restatable}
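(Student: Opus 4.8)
\emph{Proof proposal.} The plan is to assume an optimal $M$ exists and show $\rho(\theta)$ must be classical, contradicting the hypothesis. Three reductions come first. (i) If $M'$ is obtained from $M$ by splitting each $M_k$ into a sum of positive operators, then $M'\preceq M$: any estimator on $M$ lifts to one on $M'$ (discard the extra outcome information) with the same risk, so every refinement of an optimal measurement is optimal and we may take $M=\{\dyad{v_a}\}$ rank one. (ii) Since the relation $M\preceq F$ quantifies over the same set of estimators whatever the parameter set is, $M$ optimal on $\Theta$ is optimal on every convex $\Theta'\subseteq\Theta$; as $\rho$ is not classical there are $\theta_1,\theta_2$ with $[\rho(\theta_1),\rho(\theta_2)]\neq 0$, and restricting to the segment $[\theta_1,\theta_2]$ we may assume $N=1$ and that $\Theta$ is a bounded interval over which the states do not all commute. (iii) Clipping an estimator into $\Theta$ cannot increase least-squares risk, so all estimators may be taken bounded and $\theta\mapsto R(\th,\theta)$ continuous.

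The crux is that an optimal $M$ is Bayes-optimal for \emph{every} prior. Fix a prior $\pi$ with full support on $\Theta$ and let $v_\pi=\inf_{F,\ttd_F}\int R(\ttd_F,\theta)\,d\pi(\theta)$ be the Bayes risk, attained by some pair $(F^\star,\ttd^\star)$ (an explicit minimiser is furnished by the standard Bayesian least-squares theory, e.g. Personick). Optimality of $M$ gives an estimator $\th_M$ with $R(\th_M,\cdot)\leq R(\ttd^\star,\cdot)$ pointwise on $\Theta$; integrating against $\pi$ forces $\int R(\th_M,\theta)\,d\pi=v_\pi$, hence $R(\th_M,\cdot)=R(\ttd^\star,\cdot)$ $\pi$-almost everywhere, hence everywhere by continuity and full support. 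Thus $(M,\th_M)$ is itself Bayes-optimal for $\pi$.

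Now I would use the structure of Bayes-optimal estimation under quadratic loss. For a fixed measurement the optimal estimator is the posterior mean, here $\th_M(a)=\mathbb{E}_{\pi_a}[\theta]$ with $d\pi_a\propto\langle v_a|\rho(\theta)|v_a\rangle\,d\pi$; and the first-order (Karush--Kuhn--Tucker) conditions for $M$ to be the optimal POVM give a Hermitian multiplier $\Gamma$ with $\Gamma|v_a\rangle=\big(2\mathbb{E}_{\pi_a}[\theta]\,\bar T_\pi-\mathbb{E}_{\pi_a}[\theta]^2\,\bar\rho_\pi\big)|v_a\rangle$, where $\bar\rho_\pi=\int\rho\,d\pi$ and $\bar T_\pi=\int\theta\,\rho\,d\pi$. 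Comparing $\langle v_b|\Gamma|v_a\rangle$ with the conjugate of $\langle v_a|\Gamma|v_b\rangle$ and cancelling the (generically nonzero) factor $\mathbb{E}_{\pi_a}[\theta]-\mathbb{E}_{\pi_b}[\theta]$ yields, with $\rho_{ab}(\theta)=\langle v_a|\rho(\theta)|v_b\rangle$,
\begin{equation}
2\int\theta\,\rho_{ab}(\theta)\,d\pi(\theta)=\big(\mathbb{E}_{\pi_a}[\theta]+\mathbb{E}_{\pi_b}[\theta]\big)\int\rho_{ab}(\theta)\,d\pi(\theta)\qquad(a\neq b),
\end{equation}
valid for \emph{every} prior $\pi$. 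Substituting $\pi=(1-\epsilon)\delta_{\theta_0}+\epsilon\nu$ and keeping the $\epsilon$-linear term (the $\epsilon^0$ term cancels) gives, for every probe $\nu$ and every $\theta_0$ with $\rho_{ab}(\theta_0),\rho_{aa}(\theta_0),\rho_{bb}(\theta_0)\neq 0$,
\begin{equation}
\mathbb{E}_\nu\!\left[(\theta-\theta_0)\left(\frac{2\rho_{ab}(\theta)}{\rho_{ab}(\theta_0)}-\frac{\rho_{aa}(\theta)}{\rho_{aa}(\theta_0)}-\frac{\rho_{bb}(\theta)}{\rho_{bb}(\theta_0)}\right)\right]=0 ,
\end{equation}
so the bracketed function of $\theta$ vanishes identically. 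Re-using this at two distinct base points shows $\rho_{aa}/\rho_{bb}$ is constant whenever $\rho_{ab}\not\equiv 0$, whence $\rho_{ab}(\theta)\propto\rho_{aa}(\theta)$; combined with $\sum_a\rho_{aa}(\theta)=1$ this makes $\rho(\theta)$ a direct sum of blocks, each a fixed positive matrix times a scalar function of $\theta$ — i.e.\ classical — the desired contradiction.

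I expect this last structural deduction to be the main obstacle: turning the one-parameter family of moment identities into ``$\rho$ is classical'' requires, besides the calculation just sketched, a careful treatment of the degenerate configurations (outcomes whose posterior means coincide for every prior, or points where a diagonal entry $\rho_{aa}$ vanishes), which must be shown to force only further proportionalities compatible with classicality. By contrast the reductions are formal and the Bayes-optimality input is classical, so essentially the whole weight of the ``only if'' direction rests on this final argument.
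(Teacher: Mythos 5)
Your overall strategy coincides with the paper's: both proofs hinge on the observation that an optimal measurement must be Bayes for \emph{every} prior, and both then extract constraints on $\rho(\theta)$ from the necessary conditions for Bayes optimality under quadratic loss as the prior is varied. Where you diverge is in which priors you vary and what you do with the resulting conditions. The paper works with two-point priors $p\delta_{\theta_1}+(1-p)\delta_{\theta_2}$, takes the limits $p\to0$ and $p\to1$ of the stationarity equation $\{\Lambda,\bar{\rho}\}=2\bar{\rho}'$ to produce two simultaneously diagonalisable solutions of $\{\Gamma,\rho_2\}=\rho_1$ and $\{\Gamma',\rho_1\}=\rho_2$, and then proves an algebraic lemma (a graph-connectivity argument on matrix entries) showing that such a pair forces $[\rho_1,\rho_2]=0$; classicality follows from pairwise commutation. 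You instead perturb a point mass at an arbitrary base point by $\epsilon\nu$ and read off moment identities on the matrix elements $\rho_{ab}(\theta)$. Both endgames are viable in principle, but yours as written has a hole.

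The hole is the degenerate case you flag, and it is not benign. Your identity $2\int\theta\rho_{ab}\,d\pi=(\mathbb{E}_{\pi_a}[\theta]+\mathbb{E}_{\pi_b}[\theta])\int\rho_{ab}\,d\pi$ is obtained by comparing $\langle v_b|\Gamma|v_a\rangle$ with its conjugate and \emph{cancelling} the factor $\mathbb{E}_{\pi_a}[\theta]-\mathbb{E}_{\pi_b}[\theta]$. When $\rho_{aa}(\theta)\propto\rho_{bb}(\theta)$ the posterior means coincide for every prior, that factor is identically zero, and Hermiticity of your multiplier yields no constraint whatsoever on $\rho_{ab}$. Such pairs can carry all of the non-classicality: for $\rho(\theta)=\tfrac12(\mathbbm{1}+\cos\theta\,\sigma_x+\sin\theta\,\sigma_y)$ measured in the computational basis, $\rho_{00}=\rho_{11}=1/2$ and $\rho_{01}=e^{-i\theta}/2$, so the pair is degenerate, your identities are silent, and the state is maximally non-classical. (This $M$ is of course not optimal, but your argument does not exclude it.) The repair is to use the full necessary condition for a Bayes measurement, $\{\Lambda,\bar{\rho}\}=2\bar{\rho}'$ with $\Lambda=\sum_a\hat{\theta}(a)\dyad{v_a}$ --- the paper proves its necessity in Lemma \ref{lem:bayesian_conditions} --- whose $(b,a)$ matrix element gives your moment identity directly, \emph{without} dividing by $\hat{\theta}(a)-\hat{\theta}(b)$, so the degenerate case evaporates. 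Two smaller issues remain: you assert rather than prove that your first-order condition is necessary for the \emph{global} optimum over POVMs (the paper proves necessity of the Personick conditions explicitly), and the configurations where $\rho_{aa}(\theta_0)$ or $\rho_{ab}(\theta_0)$ vanish --- the analogue of the rank-deficient case that occupies the paper's Lemma \ref{lem:states_commute} --- are flagged but not handled.
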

 The proof of this Theorem is given in Appendix \ref{sec:conv_to_opt}. However, here we provide a sketch of the proof. 
\begin{proof}[Proof Sketch]
We assume the existence of an optimal measurement $M$ and deduce that $\rho(\theta)$ must be classical. We first restrict our analysis to the single-parameter ($N=1$) case. We fix $\theta_1,\theta_2\in\Theta$ and consider the restricted parameter space $\{\theta_1,\theta_2\}\subseteq\Theta$. We then consider the Bayesian scenario of being given the states $\rho(\theta_1)$ with probability $p$ and $\rho(\theta_2)$ with probability $1-p$. We show that if an optimal measurement exists for this Bayesian parameter-estimation problem, then $\rho(\theta_1)$ and $\rho(\theta_2)$ commute. However, we also show that $M$ must be optimal for any such Bayesian problem. Thus, we deduce that if an optimal measurement exists on $\Theta$, all $\{\rho(\theta):\theta\in\Theta\}$ must commute. We use this to simultaneously diagonalise the $\{\rho(\theta)\}$ and thus we deduce that $\rho$ is classical. By restricting to lines in higher dimension parameter spaces, we generalise the single parameter case to the multiparameter case.
\end{proof}

Thus, for $\rho(\theta)$ non-classical, there is no optimal measurement strategy. Whatever measurement $M$ one picks, there is always a measurement estimator pair $(F,\th_F)$ that one cannot guarantee to do at least as well as. Some trade-off or additional modelling assumption is required to decide on the best measurement strategy.\\

To develop physical intuition for Theorems \ref{thm:clas_optimal} and \ref{thm:converse_to_optimality}, it is useful to consider distinguishablity of states. For $\sigma$ and $\nu$ quantum states, the optimal measurement to distinguish them \cite{Hels1969} is with respect to the eigenbasis of $\sigma - \nu$. If $\rho(\theta)$ is a classical parametrised state, then for any $\theta_1, \theta_2\in\Theta$ the eigenbasis of $\rho(\theta_1)-\rho(\theta_2)$ is constant and thus the same measurement is optimal for distinguishing all possible pairs of states. Since it is pairwise optimal, it is intuitive that this measurement is then ``globally" optimal for parameter estimation. For non-classical states, different measurements are better at distinguishing different parts of the parameter space. This effect is best demonstrated by the ``most non-classical" parametrised state: a pure state.
For $\theta\in[0,2\pi)$, the Mach-Zehnder state \cite{Giov2011} is
\begin{equation}
	\ket{\psi(\theta)} = \frac{1}{\sqrt{2}}(\ket{0} + e^{i\theta}\ket{1}).
\end{equation}

Consider a measurement $M$ of the state defined by measuring with respect to the basis $\{\ket{+},\ket{-}\}$. $M$ saturates the quantum Fisher information  (for all $\theta$) \cite{Giov2011} and is thus a good candidate for an optimal measurement. It is easy to check that the outcome probability distribution is
\begin{equation}\label{eqn:pm_probs}
	p_+(\theta) = \cos^2(\theta/2), \quad p_-(\theta) = \sin^2(\theta/2).
\end{equation}

Consider a different measurement $F$, defined by measuring with respect to the basis
\begin{align}
	\ket{e_1} &= \frac{1}{\sqrt{2}}(\ket{0} + e^{i\pi/4}\ket{1}), \nonumber\\
	\ket{e_2} &= \frac{1}{\sqrt{2}}(e^{-i\pi/4}\ket{0} - \ket{1}).
\end{align}

Measuring with respect to $F$ gives rise to a probability distribution of
\begin{align}
	p_1(\theta) &= \cos^2\left[\frac{1}{2}\left(\theta-\frac{\pi}{4}\right)\right],\nonumber \\
	p_2(\theta) &= \sin^2\left[\frac{1}{2}\left(\theta-\frac{\pi}{4}\right)\right].
\end{align}

Define an estimator $\th_F$ by $\th_F(1) = \pi/4,\ \th_F(2) = \pi/2$. Note in particular that $\th_F$ has zero risk at $\pi/4$ and thus we have a non-constant estimator that has zero risk at $\pi/4$. If $M$ were optimal there would exist some estimator $\th_M$ such that $\th_M\leq\th_F$. But then we need $\th_M$ to have zero risk at $\pi/4$, but from equation \eqref{eqn:pm_probs}, we see this is only possible with the constant estimator $\th_M(+)=\th_M(-)=\pi/4$. However, in this case $R(\th_F,\pi/2) < R(\th_M, \pi/2)$ and we deduce that $M\npreceq F$. Thus $M$ is certainly not optimal.

Note that $M$ and $F$ both have maximal Fisher information and thus the measurements both have the same asymptotic, local, estimation ability. However, within global estimation they are \textit{not} interchangeable; $M$ cannot discriminate between $[0,\pi )$ and $[\pi, 2\pi )$ whereas $F$ cannot discriminate between $[\pi/4,5\pi/4 )$ and $[5\pi/4,9\pi/4)$. Thus, given a large number of copies it would be a poor strategy to measure all of the states with $M$ or all of the states with $F$ as one could not distinguish $\theta$ from $\theta + \pi$. Thus in practice, one would use a collection of different measurements; see for example \cite{Smit2022}.\\

The above argument generalises to other pure states. Suppose one has some (generically multiparameter) parametrised pure state $\ket{\psi(\theta)}$. Then for any fixed $\theta_*$, letting $F_{\theta_*}$ be a measurement in some basis containing $\ket{\psi(\theta_*)}$, one can clearly construct an estimator with zero risk at $\theta_*$. However, in general, the only estimators $\th_{F_{\theta_*}}$ with zero risk at some $\theta\neq\theta_*$ will be constant. Thus we expect none of the $\{F_{\theta_*}|\theta_*\in\Theta\}$ to be optimal and thus expect no optimal measurement for $\ket{\psi(\theta)}$.

\section{Approximately Optimal Measurements}\label{sec:approx_opt}
Since the previous section shows that optimal measurements (under least-squares loss) only exist for the very restrictive class of classical parametrised states, it is natural to ask whether measurements can be approximately optimal. There are three clear candidates for defining when a measurement is approximately optimal:
\begin{enumerate}[label=(\roman*)]
	\item{
		A measurement $M$ is $\epsilon$-additively optimal if, for any other measurement-estimator pair ($F$, $\th_F$) there exists an estimator $\th_M$ such that for all $\theta\in\Theta$, 
		\begin{equation}
		R(\th_M,\theta)\leq R(\th_F,\theta)+\epsilon.
		\end{equation}
	}
	\item{
		A measurement $M$ is $\eta$-multiplicatively optimal if, for any other measurement-estimator pair ($F$, $\th_F$) there exists an estimator $\th_M$ such that for all $\theta\in\Theta$,
		\begin{equation}
		R(\th_M,\theta)\leq (1+\eta)R(\th_F,\theta).
		\end{equation}
	}
	\item{
		A measurement is $\delta$-locally optimal if, for any other measurement-estimator pair ($F$, $\th_F$) there exists an estimator $\th_M$ and subset $S\subseteq\Theta$ such that $S$ has volume (measure) less than or equal to $\delta$ and for all $\theta$ in $\Theta\setminus S$,
		\begin{equation}
		R(\th_M,\theta)\leq R(\th_F,\theta).
		\end{equation}
	}
\end{enumerate}
We can combine the third definition with either of the first two. For example, a measurement is $\epsilon$-additively and $\delta$-locally optimal if, for any other measurement-estimator pair ($F$, $\th_F$) there exists an estimator $\th_M$ and subset $S\subseteq\Theta$ such that $S$ has volume (measure) less than or equal to $\delta$ and for all $\theta$ in $\Theta\setminus S$,
\begin{equation}
R(\th_M,\theta)\leq R(\th_F,\theta)+\epsilon.
\end{equation}
For each of the three definitions of approximate optimality, there is a corresponding notion of ``closeness" of quantum states such that if $\rho(\theta)$ is close to being classical, there is an approximately optimal measurement---approximately classical implies an approximately optimal measurement. We briefly review each of the results here, the first two of which are proved in Appendix \ref{sec:app_approx_opt}. The proofs are made up of a series of technical inequalities.

We start with additive optimality, considering closeness in trace norm: for any matrix $A$, define its trace norm as 
\begin{equation}
    \norm{A}_1 = \Tr(\sqrt{A^\dag A}).
\end{equation}
We denote $\sqrt{A^\dag A}$ as $|A|$.

We can now state the first result:
\begin{restatable}{lemma}{aao}
	Suppose $\rho(\theta)$ is some parametrised state and $\sigma(\theta)$ is some classical state. Suppose further that $\Theta$ has finite diameter $d=\sup_{\theta,\phi} L(\theta,\phi)$. If, for all $\theta$, we have that $\norm{\rho(\theta)-\sigma(\theta)}_1\leq \epsilon$, then there is a $2d\epsilon$-additively optimal measurement $M$. 
\end{restatable}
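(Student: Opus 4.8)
The plan is to recycle the classical-to-quantum coarse-graining argument from the proof of Theorem~\ref{thm:clas_optimal}, but applied to the nearby classical state $\sigma(\theta)$, and then to pay a trace-norm penalty for the fact that the data is actually generated by $\rho(\theta)$. Since $\sigma(\theta)$ is classical, fix a $\theta$-independent basis with $\sigma(\theta)=\sum_i q_i(\theta)\dyad{i}$ and set $M=\{\dyad{i}\}$. Given any competing pair $(F,\th_F)$ with $F=\{F_k\}$, put $m_{k,i}=\Tr(F_k\dyad{i})\geq 0$, so that $\sum_k m_{k,i}=1$, and define the coarse-grained estimator $\th_M(i)=\sum_k m_{k,i}\th_F(k)$. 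Since $\Theta$ is convex, $\th_M$ is again $\Theta$-valued, and since least-squares loss is convex in its first argument, exactly the computation in the proof of Theorem~\ref{thm:clas_optimal} gives
\begin{equation}
R(\th_M,\theta)\;\leq\;\sum_{i,k}\bra{i}\rho(\theta)\ket{i}\,m_{k,i}\,L(\th_F(k),\theta)\;=\;\sum_k\Tr\!\big(\rb(\theta)F_k\big)\,L(\th_F(k),\theta),
\end{equation}
where $\rb(\theta)=\sum_i\bra{i}\rho(\theta)\ket{i}\dyad{i}=\mathcal{D}(\rho(\theta))$ is the pinching of $\rho(\theta)$ onto the basis $\{\ket{i}\}$, with $\mathcal{D}(\cdot)=\sum_i\dyad{i}(\cdot)\dyad{i}$; here one uses $\sum_i\bra{i}\rho(\theta)\ket{i}\Tr(F_k\dyad{i})=\Tr(\rb(\theta)F_k)$.

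Subtracting $R(\th_F,\theta)=\sum_k\Tr(\rho(\theta)F_k)\,L(\th_F(k),\theta)$ gives
\begin{equation}
R(\th_M,\theta)-R(\th_F,\theta)\;\leq\;\sum_k\Tr\!\big((\rb(\theta)-\rho(\theta))F_k\big)\,L(\th_F(k),\theta).
\end{equation}
Next I would decompose the Hermitian, traceless operator $\rb(\theta)-\rho(\theta)=\Delta_+-\Delta_-$ into its positive and negative parts, drop the non-positive $\Delta_-$ terms, bound $L(\th_F(k),\theta)\leq d$ (legitimate since both arguments lie in $\Theta$; restricting to $\Theta$-valued estimators costs nothing for least squares on a convex $\Theta$, as Euclidean projection onto $\Theta$ never increases the loss), and use $\sum_k F_k=\mathbbm{1}$ to obtain a bound of $d\,\Tr(\Delta_+)=\tfrac{1}{2} d\,\norm{\rb(\theta)-\rho(\theta)}_1$. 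Finally, $\mathcal{D}$ is trace-preserving and positive, hence a contraction for $\norm{\cdot}_1$, and it fixes the already-diagonal $\sigma(\theta)$; thus $\norm{\rb(\theta)-\sigma(\theta)}_1=\norm{\mathcal{D}(\rho(\theta)-\sigma(\theta))}_1\leq\norm{\rho(\theta)-\sigma(\theta)}_1\leq\epsilon$, and the triangle inequality yields $\norm{\rb(\theta)-\rho(\theta)}_1\leq 2\epsilon$. Chaining the estimates, $R(\th_M,\theta)\leq R(\th_F,\theta)+d\epsilon\leq R(\th_F,\theta)+2d\epsilon$ for every $\theta\in\Theta$, i.e.\ $M$ is $2d\epsilon$-additively optimal (in fact $d\epsilon$-additively optimal).

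I do not expect any single step to be genuinely hard; the content is in setting up the right comparison. The point to keep in mind is that $\th_M$ is optimal for the \emph{fictitious} state $\sigma(\theta)$, not for $\rho(\theta)$, so the bound rests on two quantitative facts: (a) pinching contracts the trace norm, so dephasing $\rho(\theta)$ leaves it $\epsilon$-close to $\sigma(\theta)$; and (b) boundedness of the loss on $\Theta$ converts the trace-norm discrepancy between $\rho(\theta)$ and $\rb(\theta)$ into a \emph{uniform} additive risk penalty. The only small technicality worth spelling out is the reduction to $\Theta$-valued estimators used for the bound $L(\th_F(k),\theta)\leq d$; everything else is a short chain of inequalities.
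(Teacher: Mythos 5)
Your proof is correct, but it takes a genuinely different route from the paper's. The paper treats Theorem \ref{thm:clas_optimal} as a black box: it fixes the optimal measurement $M$ for $\sigma$, proves the generic perturbation bound $|R_\rho(\th_F,\theta)-R_\sigma(\th_F,\theta)|\leq d\,\norm{\rho(\theta)-\sigma(\theta)}_1\leq d\epsilon$ for \emph{every} measurement--estimator pair, and then pays this penalty twice (once to move $\th_M$'s risk from $\rho$ to $\sigma$, once to move $\th_F$'s back), which is exactly where the constant $2d\epsilon$ comes from. You instead open the black box: you re-run the coarse-graining inequality of Theorem \ref{thm:clas_optimal} directly against the true state $\rho(\theta)$, which shows that $R_\rho(\th_M,\theta)$ is controlled by the risk of $\th_F$ under the pinched state $\mathcal{D}(\rho(\theta))$, so the only error you ever pay is $\tfrac{1}{2}d\,\norm{\mathcal{D}(\rho(\theta))-\rho(\theta)}_1$; contractivity of the pinching plus the triangle inequality then gives $d\epsilon$, a factor-of-two improvement over the stated bound (which of course still implies it). All the individual steps check out: the positive/negative-part decomposition with $\Tr(\Delta_+)=\tfrac12\norm{\cdot}_1$ for a traceless Hermitian operator, and $\norm{\mathcal{D}(\rho-\sigma)}_1\leq\norm{\rho-\sigma}_1$ with $\mathcal{D}(\sigma)=\sigma$, are all standard and correctly applied. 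What the paper's version buys is modularity --- it works verbatim for any optimal measurement of $\sigma$ and any bounded loss, without re-deriving the convexity argument; what yours buys is the sharper constant and a cleaner identification of the sole error source. One shared caveat: both proofs invoke $L(\th_F(k),\theta)\leq d$, which tacitly requires estimators to be $\Theta$-valued (the paper elsewhere allows estimators to range over all of $\mathbb{R}^N$); you are actually more careful than the paper here by noting that for least squares on convex $\Theta$ one may project estimates onto $\Theta$ without increasing the loss, though this does smuggle in a convexity assumption on $\Theta$ that the lemma statement does not make explicit.
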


For the multiplicative error, we need a different notion of closeness, namely the maximum relative entropy \cite{Moso2013}: for quantum states $\rho$ and $\sigma$, we define
\begin{equation}
D_{\max}(\rho||\sigma) = \inf\{\gamma : \rho \leq e^\gamma \sigma\}.
\end{equation}

We can then state the multiplicative result:
\begin{restatable}{lemma}{amo}
	Suppose $\rho(\theta)$ is some parametrised state and $\sigma(\theta)$ is some classical state. If, for all $\theta$, we have that $\exp\{D_{\max}[\rho(\theta)||\sigma(\theta)]+ D_{\max}[\sigma(\theta)||\rho(\theta)]\}\leq 1+\eta$, then there is a $\eta$-multiplicatively optimal measurement $M$. 
\end{restatable}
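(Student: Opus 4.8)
The plan is to take $M$ to be exactly the measurement used in Theorem~\ref{thm:clas_optimal}: writing the classical state in its fixed eigenbasis as $\sigma(\theta)=\sum_i p_i(\theta)\dyad{i}$, set $M=\{\dyad{i}\}$. The hypothesis forces $D_{\max}[\rho(\theta)||\sigma(\theta)]$ and $D_{\max}[\sigma(\theta)||\rho(\theta)]$ to be finite, and in finite dimension the set $\{\gamma:\rho\leq e^\gamma\sigma\}$ is closed, so we may pick $\gamma_1(\theta),\gamma_2(\theta)$ attaining these infima, giving
\begin{equation}
\rho(\theta)\leq e^{\gamma_1(\theta)}\sigma(\theta),\qquad \sigma(\theta)\leq e^{\gamma_2(\theta)}\rho(\theta),\qquad e^{\gamma_1(\theta)+\gamma_2(\theta)}\leq 1+\eta .
\end{equation}
(If one wants to sidestep attainment, replace $\gamma_j(\theta)$ by $D_{\max}+\epsilon'$ and take $\epsilon'\downarrow 0$ at the end; the estimator constructed below does not depend on $\epsilon'$, so this is harmless.)

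Next I would recycle the coarse-graining step of Theorem~\ref{thm:clas_optimal} verbatim. Given any competitor pair $(F,\th_F)$ with $F=\{F_k\}$, put $m_{k,i}=\Tr(F_k\dyad{i})=\bra{i}F_k\ket{i}\geq 0$; since $\sum_k m_{k,i}=1$, the quantity $\th_M(i):=\sum_k m_{k,i}\,\th_F(k)$ is a convex combination. Write $q_i(\theta)=\bra{i}\rho(\theta)\ket{i}$ for the outcome probabilities of $M$ on $\rho(\theta)$. Convexity of the least-squares loss in its first argument gives, for each $i$, $\norm{\th_M(i)-\theta}^2\leq\sum_k m_{k,i}\norm{\th_F(k)-\theta}^2$, hence
\begin{equation}
R(\th_M,\theta)=\sum_i q_i(\theta)\norm{\th_M(i)-\theta}^2\leq\sum_k\Big(\sum_i q_i(\theta)\,m_{k,i}\Big)\norm{\th_F(k)-\theta}^2 .
\end{equation}

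The remaining work is to insert the two operator inequalities. Sandwiching $\rho(\theta)\leq e^{\gamma_1(\theta)}\sigma(\theta)$ between $\bra{i}$ and $\ket{i}$ gives $q_i(\theta)\leq e^{\gamma_1(\theta)}p_i(\theta)$; pairing $\sigma(\theta)\leq e^{\gamma_2(\theta)}\rho(\theta)$ with the positive operator $F_k$ gives $\sum_i p_i(\theta)m_{k,i}=\Tr(\sigma(\theta)F_k)\leq e^{\gamma_2(\theta)}\Tr(\rho(\theta)F_k)=e^{\gamma_2(\theta)}p_k^{(F)}(\theta)$. Chaining these,
\begin{equation}
\sum_i q_i(\theta)m_{k,i}\leq e^{\gamma_1(\theta)}\sum_i p_i(\theta)m_{k,i}\leq e^{\gamma_1(\theta)+\gamma_2(\theta)}p_k^{(F)}(\theta),
\end{equation}
so that $R(\th_M,\theta)\leq e^{\gamma_1(\theta)+\gamma_2(\theta)}\sum_k p_k^{(F)}(\theta)\norm{\th_F(k)-\theta}^2=e^{\gamma_1(\theta)+\gamma_2(\theta)}R(\th_F,\theta)\leq(1+\eta)R(\th_F,\theta)$, which is exactly $\eta$-multiplicative optimality.

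I do not anticipate a genuine obstacle: the only conceptual point is recognising that one $D_{\max}$ bound transports the statistics of $M$ from $\rho(\theta)$ to $\sigma(\theta)$ while the other transports the statistics of $F$ from $\sigma(\theta)$ back to $\rho(\theta)$, each at a multiplicative cost whose product is controlled by the hypothesis, and that the classical structure of $\sigma(\theta)$ is what lets the Theorem~\ref{thm:clas_optimal} coarse-graining bridge the two; everything after that is a short chain of inequalities. The only things needing a little care are the attainment remark above and the observation that finiteness of both relative entropies forces $\operatorname{supp}\rho(\theta)=\operatorname{supp}\sigma(\theta)$, so no degenerate behaviour occurs when an outcome probability vanishes — neither is substantive.
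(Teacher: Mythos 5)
Your proof is correct and follows essentially the same route as the paper: take the optimal measurement for $\sigma$ (its eigenbasis) and use one $D_{\max}$ bound to pass from $\rho$-statistics to $\sigma$-statistics and the other to pass back, at total multiplicative cost $e^{D_{\max}(\rho\|\sigma)+D_{\max}(\sigma\|\rho)}\leq 1+\eta$. The only difference is presentational --- you inline the coarse-graining construction of Theorem~\ref{thm:clas_optimal} and apply the operator inequalities term by term, whereas the paper black-boxes the optimality of $M$ for $\sigma$ and the variational characterisation of $D_{\max}$, bounding the ratio $R_\rho/R_\sigma$ for a fixed estimator directly.
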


The final closeness result is for local optimality. For a subset of real vectors $Y\subseteq\mathbb{R}^p$, denote its volume (Lebesgue measure) by $|Y|$.
\begin{lemma}
	Suppose $\rho(\theta)$ is some parametrised state, which is classical on some subset $\Gamma\subseteq\Theta$. Then there is a $(|\Theta|-|\Gamma|)$-locally optimal measurement $M$.
\end{lemma}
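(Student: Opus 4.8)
The plan is to recognise this as a localised version of Theorem~\ref{thm:clas_optimal}. Since $\rho(\theta)$ is classical on $\Gamma$, there is a $\theta$-independent basis $\{\ket{i}\}$ and probabilities $\{p_i(\theta)\}$ with $\rho(\theta) = \sum_i p_i(\theta)\dyad{i}$ for every $\theta\in\Gamma$. I would take $M=\{\dyad{i}\}$ and, for every competing measurement-estimator pair, use the single exceptional set $S = \Theta\setminus\Gamma$, which is measurable with $|S| = |\Theta|-|\Gamma|$ because $\Gamma\subseteq\Theta$; this is even stronger than the definition of $\delta$-local optimality requires, since it allows $S$ to depend on $(F,\th_F)$.

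It then remains to check that $M$ is optimal on $\Theta\setminus S = \Gamma$, and here I would reuse verbatim the construction from the proof of Theorem~\ref{thm:clas_optimal}. Given $F=\{F_k\}$ with estimator $\ttd_F$, set $m_{k,i}=\Tr(F_k\dyad{i})$, so that $m_{k,i}\ge 0$ and $\sum_k m_{k,i}=1$, and define $\th_M(i)=\sum_k m_{k,i}\ttd_F(k)$. For $\theta\in\Gamma$ the classical decomposition of $\rho(\theta)$ gives $\sum_i p_i(\theta)m_{k,i} = \Tr(\rho(\theta)F_k) = p_k^{(F)}(\theta)$, and since the loss is convex in its first argument (in particular, for least-squares loss), Jensen's inequality yields $R(\th_M,\theta) \le \sum_{i,k} p_i(\theta)m_{k,i}L(\ttd_F(k),\theta) = R(\ttd_F,\theta)$ for all $\theta\in\Gamma$. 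This is exactly the inequality $R(\th_M,\theta)\le R(\ttd_F,\theta)$ on $\Theta\setminus S$, so $M$ is $(|\Theta|-|\Gamma|)$-locally optimal.

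The only step needing care is the identity $\sum_i p_i(\theta)m_{k,i} = p_k^{(F)}(\theta)$: this is the sole point in the argument that invokes the classical decomposition of $\rho$, and off $\Gamma$ it may fail, which is precisely why the exceptional set $S=\Theta\setminus\Gamma$ is unavoidable. Beyond this there is no genuine obstacle — the lemma is a direct corollary of Theorem~\ref{thm:clas_optimal} applied on $\Gamma$, and since that theorem imposes no convexity or topological condition on the parameter set, restricting attention from $\Theta$ to $\Gamma$ is harmless.
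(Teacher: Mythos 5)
Your proposal is correct and is essentially the paper's own argument: the paper's proof is the one-liner ``take the optimal measurement for $\rho|_\Gamma$,'' and you have simply unpacked that by rerunning the coarse-graining/Jensen argument of Theorem~\ref{thm:clas_optimal} on $\Gamma$ with exceptional set $S=\Theta\setminus\Gamma$. No gaps; the extra detail you supply is exactly what ``clearly has the desired property'' is eliding.
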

\begin{proof}
	Take the optimal measurement for $\rho|_\Gamma$. This clearly has the desired property.
\end{proof}

Note that it is not expected that being close to a classical state is necessary for an approximately optimal measurement to exist. For example, pure states $\ket{\psi(\theta)}$ where the quantum Fisher information can be saturated are expected to have an approximately optimal measurement for a large number of copies $\ket{\Psi(\theta)} = \ket{\psi(\theta)}^{\otimes n}, n\gg 1$. However, a large number of copies of a pure state is still a pure state and thus such states are far from being classical.

\section{Admissibility of Measurements}\label{sec:inad_of_meaus}
Our definition of when a measurement is better than another $(M\prec F)$ naturally gave rise to a definition of admissibility---a measurement $M$ is admissible if no other measurement dominates it. In this section we investigate two classes of intuitively ``bad" measurements and demonstrate their inadmissibility under some mild assumptions. 

To state these assumptions, we must introduce a specific class of loss functions---Bregman divergences \cite{Bane2005}. Specifically, for a convex set $\Theta\subseteq \mathbb{R}^p$ and any real-valued continuously differentiable, strictly convex function $g(\theta)$, the Bregman divergence associated to $g$ is given by
\begin{equation}
B(\theta_1,\theta_2) = g(\theta_1) - g(\theta_2) - [\grad g(\theta_2)]^T(\theta_1-\theta_2).
\end{equation}
A Bregman divergence is any function that can be written in such a fashion.
For example, least squares is a Bregman divergence with $g(\theta) = ||\theta||^2$ and the Kullback-Liebler divergence is a Bregman divergence with $g(\theta) = \sum_i \theta_i\log(\theta_i)$. Bregman divergences may be viewed as generalisations of the least-squares loss function.  Appendix \ref{sec:app_admis_of_meas} reviews two of their properties that we require for our results.

The first class of measurements we discuss are those that can be fine-grained to extract more information from a system. Suppose that one has some parameter estimation problem $\rho(\theta)$ and use some measurement $F$ where one knows the post measurement states. That is to say that $F=\{F_i^{\dag}F_i\}$ and, in the case of outcome $i$, the post measurement state is 
\begin{equation}
\rho_i = \frac{F_i\rho F_i^{\dag}}{\Tr(F_i\rho F_i^{\dag})}.
\end{equation}
Intuitively, if some $\rho_i$ still depends on $\theta$, then one can, in the case of outcome $i$, perform another measurement on the system to extract more information about the unknown parameter---refining our measurement. Recall that $\rho$ might be many copies of some density matrix: $\rho = \sigma^{\otimes n}$, so that outcome $i$ could be the result of many measurements of individual density matrices. More precisely, we say that $F$ is refineable if there exists an outcome (WLOG the first) as well as $\theta_1,\theta_2\in\Theta$ such that $p_1^{(F)}(\theta_1),\ p_1^{(F)}(\theta_2)> 0$ and $\rho_1(\theta_1)\neq\rho_1(\theta_2)$ \footnote{The first of these conditions is necessary to ensure that $\rho_1$ is well-defined at $\theta_1$ and $\theta_2$.}. In Appendix \ref{sec:app_admis_of_meas}, we prove the following Lemma, that shows that such measurements are indeed inadmissible:

\begin{restatable}{lemma}{rfin}\label{lem:refinable_inadmis}
	Let $\Theta\subseteq\mathbb{R}^N$, $L$ be some Bregman divergence loss function on $\Theta$ and $\rho(\theta)$ be some non-constant parametrised state. Suppose that $F=\{F_i^{\dag}F_i\}$ is a refineable measurement, then $F$ is inadmissible.
\end{restatable}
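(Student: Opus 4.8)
The plan is to show that any refineable $F$ is strictly dominated by the measurement $G$ obtained from $F$ by \emph{fine-graining} the witnessing outcome, so that $G\prec F$ and hence $F$ is inadmissible. Let outcome $1$ and $\theta_1,\theta_2$ be as in the definition of refineability, so $p_1^{(F)}(\theta_1),p_1^{(F)}(\theta_2)>0$ and $\rho_1(\theta_1)\neq\rho_1(\theta_2)$; note $\theta_1\neq\theta_2$, which also makes the ``non-constant'' hypothesis automatic. Since $\rho_1(\theta_1)-\rho_1(\theta_2)$ is a nonzero Hermitian operator, pick a projector $E$ onto an eigenspace for a nonzero eigenvalue, so $\Tr[E\rho_1(\theta_1)]\neq\Tr[E\rho_1(\theta_2)]$. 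Define $G$ to have effects $F_1^\dagger E F_1$, $F_1^\dagger(\mathbbm{1}-E)F_1$, and $F_i^\dagger F_i$ for $i\geq 2$; these sum to $\mathbbm{1}$, and the probabilities of the two new outcomes are $p^{(G)}_{(1,+)}(\theta)=\Tr[E\rho_1(\theta)]\,p_1^{(F)}(\theta)$ and $p^{(G)}_{(1,-)}(\theta)=\Tr[(\mathbbm{1}-E)\rho_1(\theta)]\,p_1^{(F)}(\theta)$ whenever $p_1^{(F)}(\theta)>0$ (and both vanish otherwise). The easy direction $G\preceq F$ is immediate for any loss: given $\th_F$, let $\th_G$ equal $\th_F(1)$ on both new outcomes and $\th_F(i)$ on outcome $i\geq 2$, so $R(\th_G,\theta)=R(\th_F,\theta)$ for all $\theta$.

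The content is the strict part, $F\npreceq G$, which I would obtain from a two-point Bayesian argument. Restricting attention to $\{\theta_1,\theta_2\}\subseteq\Theta$ suffices, since an estimator's risk at $\theta_1,\theta_2$ only involves the outcome distributions there; so it is enough to produce an estimator $\th_G^\ast$ such that every $\th_F$ satisfies $R(\th_F,\theta_i)>R(\th_G^\ast,\theta_i)$ for some $i\in\{1,2\}$. Put the prior $(\tfrac12,\tfrac12)$ on $\{\theta_1,\theta_2\}$ and invoke the two properties of Bregman divergences recorded in Appendix~\ref{sec:app_admis_of_meas}: (a) $c\mapsto\mathbb{E}[B(Y,c)]$ is minimised at $c=\mathbb{E}[Y]$, so the Bayes-optimal estimator for a measurement is the posterior mean of $\theta$ given the outcome and the minimal Bayes risk is the expected posterior Bregman-spread of $\theta$; and (b) the Pythagorean identity $\mathbb{E}[B(Y,c)]=\mathbb{E}[B(Y,\mathbb{E}Y)]+B(\mathbb{E}Y,c)$. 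Applying (b) conditionally (first on the $G$-outcome, then averaging within each $F$-outcome) gives $\mathrm{BR}(F)-\mathrm{BR}(G)=\mathbb{E}[B(\mu_G(K_G),\mu_F(K_F))]\geq 0$, where $\mathrm{BR}$ is the minimal Bayes risk, $K_F,K_G$ the random outcomes, and $\mu_F,\mu_G$ the corresponding posterior means of $\theta$. Because $\Tr[E\rho_1(\theta_1)]\neq\Tr[E\rho_1(\theta_2)]$, the posterior mean given the $G$-outcome $(1,+)$ is a different point of the segment $[\theta_1,\theta_2]$ from the posterior mean given the $F$-outcome $1$, and $(1,+)$ has positive probability, so the decomposition term is strictly positive and $\mathrm{BR}(G)<\mathrm{BR}(F)$. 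Taking $\th_G^\ast$ to be the Bayes estimator of $G$ (its values, being posterior means of $\theta$, lie in $\Theta$ by convexity), we get for every $\th_F$
\begin{equation}
\tfrac12 R(\th_F,\theta_1)+\tfrac12 R(\th_F,\theta_2)\ \geq\ \mathrm{BR}(F)\ >\ \mathrm{BR}(G)\ =\ \tfrac12 R(\th_G^\ast,\theta_1)+\tfrac12 R(\th_G^\ast,\theta_2),
\end{equation}
so $R(\th_F,\theta_i)>R(\th_G^\ast,\theta_i)$ for some $i$. Hence $F\npreceq G$, and with $G\preceq F$ this gives $G\prec F$, i.e.\ $F$ is inadmissible.

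The step I expect to be the main obstacle is turning the operator inequality $\rho_1(\theta_1)\neq\rho_1(\theta_2)$ into a genuine change of the posterior \emph{mean} of $\theta$: merely changing the posterior distribution is not enough, since a symmetric change could leave the mean fixed. This is exactly why $E$ is chosen as a spectral projector on which $\rho_1(\theta_1)$ and $\rho_1(\theta_2)$ have different weight, after which a short barycentric-coordinate computation on the segment $[\theta_1,\theta_2]$ shows $\mu_G((1,+))\neq\mu_F(1)$ for any prior of full support on $\{\theta_1,\theta_2\}$. Everything else is routine bookkeeping: checking that $G$ is a valid POVM with the stated outcome probabilities, and the two Bregman identities, which are precisely the tools prepared in the appendix.
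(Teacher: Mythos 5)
Your proof is correct, and its skeleton matches the paper's: the dominating measurement is the same fine-graining of outcome $1$ by a two-outcome (Helstrom-type) measurement that distinguishes $\rho_1(\theta_1)$ from $\rho_1(\theta_2)$, the direction $G\preceq F$ is handled identically by collapsing the split outcomes, and the strict direction is reduced to a uniform two-point prior on $\{\theta_1,\theta_2\}$ with posterior means as Bayes estimators. Where you genuinely diverge is in how the strict part is closed. The paper exhibits an admissible estimator for the refined measurement that takes \emph{distinct} values on the two sub-outcomes (Lemma \ref{lem:non-const_estim}, via the cross-ratio condition $p_{(1,1)}(\theta_1)p_{(1,2)}(\theta_2)\neq p_{(1,2)}(\theta_1)p_{(1,1)}(\theta_2)$) and then derives a contradiction from Lemma \ref{lem:same_inadmiss}: any matching $F$-estimator would lift to a distinct estimator with identical risk, contradicting admissibility. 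You instead prove a strict gap in minimal Bayes risk, $\mathrm{BR}(G)<\mathrm{BR}(F)$, via the Bregman Pythagorean identity applied along the refinement of outcomes, and then conclude from the averaged-risk chain. Both arguments hinge on exactly the same fact --- that the refinement genuinely moves the posterior mean, guaranteed by $\Tr[E\rho_1(\theta_1)]\neq\Tr[E\rho_1(\theta_2)]$ --- so they are two faces of one computation. Your version is more quantitative and arguably cleaner as a standalone argument; the paper's version has the advantage of reusing the general-purpose Lemmas \ref{lem:uniq_bayes_ad}, \ref{lem:non-const_estim} and \ref{lem:same_inadmiss}, which it also needs for Lemma \ref{lem:const_prob_inadmis}. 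Two small housekeeping points: the Pythagorean identity you invoke is not one of the two properties actually recorded in Lemma \ref{lem:Breg_props} (only strict convexity and the posterior-mean characterisation are), though it is standard and is the identity underlying property (ii) in \cite{Bane2005}; and your appeal to convexity of $\Theta$ so that posterior means are admissible guesses is the same implicit assumption the paper makes in defining Bregman divergences on a convex domain.
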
 
\begin{proof}[Proof Sketch]
Constructing an $M$ such that $M\preceq F$ is relatively straightforward---in the case of the first outcome, our state still depends on the parameter $\theta$, so one can measure it again. We are always free to ignore the result of this subsequent measurement, and thus it is clear that $M\preceq F$. The proof that $F\npreceq M$ is somewhat technical and involves the framework of (classical) Bayesian estimation \cite{Lehm1998}, which is outlined in Appendix \ref{sec:Bayesian}.
\end{proof}

The second case of inadmissibility that we consider is when a measurement does not extract information from a system. For example, one could do a measurement with a single outcome or measure in a mutually unbiased basis to the basis our parameter is encoded in. Formally, this corresponds to the case where the measurement outcome probabilities are independent of the underlying parameter. Again, in Appendix \ref{sec:app_admis_of_meas}, we prove the following Lemma:

\begin{restatable}{lemma}{niin}\label{lem:const_prob_inadmis}
		Let $\Theta\subseteq\mathbb{R}^N$, $L$ be some Bregman divergence loss function on $\Theta$ and $\rho(\theta)$ be some non-constant parametrised state. Suppose that $F=\{F_k\}$ is a measurement whose outcome probabilities are independent of $\theta$, then $F$ is inadmissible under least-squares loss.
\end{restatable}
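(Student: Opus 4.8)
The plan is to show that $F$ is dominated by essentially \emph{any} measurement that extracts a nonzero amount of information, so the first task is to produce such a measurement $M$. Since $\rho(\theta)$ is non-constant there are $\theta_1\neq\theta_2$ in $\Theta$ with $\rho(\theta_1)\neq\rho(\theta_2)$; projecting onto the positive part of the traceless nonzero operator $\rho(\theta_1)-\rho(\theta_2)$ then gives a (two-outcome) measurement $M$ with $p_k^{(M)}(\theta_1)\neq p_k^{(M)}(\theta_2)$ for some outcome $k$. I would then prove $M\prec F$ by establishing $M\preceq F$ and $F\npreceq M$ separately.

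For $M\preceq F$: because the outcome probabilities $q_k:=p_k^{(F)}$ of $F$ do not depend on $\theta$, every estimator $\th_F$ is effectively a randomised constant guess. Given $\th_F$, I would form the single point $\bar\theta_F:=\sum_k q_k\,\th_F(k)$, which lies in $\Theta$ by convexity, and let $\th_M$ ignore the outcome of $M$ and always return $\bar\theta_F$. Convexity of squared error in the guess (Jensen's inequality) then yields $R(\th_M,\theta)=\|\bar\theta_F-\theta\|^2\le\sum_k q_k\|\th_F(k)-\theta\|^2=R(\th_F,\theta)$ for every $\theta$, so $M\preceq F$. This is the easy half and in fact works for any convex loss.

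For $F\npreceq M$ I would pass to the classical Bayesian picture of Appendix~\ref{sec:Bayesian} and fix the two-point prior $\pi=\tfrac12\delta_{\theta_1}+\tfrac12\delta_{\theta_2}$. Let $\th_M^{*}$ be the Bayes estimator of $(M,\pi)$ under squared error: it returns the posterior mean $\mu_k$ on outcome $k$, and its Bayes risk is the expected posterior variance $\mathbb{E}_k[\mathrm{Var}(\theta\mid k)]$. For $F$, every outcome leaves the posterior equal to the prior, so its Bayes-optimal estimator is the constant $\mu=\tfrac12(\theta_1+\theta_2)$ with Bayes risk $\mathrm{Var}_\pi(\theta)=\tfrac14\|\theta_1-\theta_2\|^2$, and hence \emph{every} $\th_F$ has Bayes risk at least $\mathrm{Var}_\pi(\theta)$. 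The law of total variance $\mathrm{Var}_\pi(\theta)=\mathbb{E}_k[\mathrm{Var}(\theta\mid k)]+\mathrm{Var}_k(\mu_k)$ together with $\mathrm{Var}_k(\mu_k)>0$ — which holds precisely because $p_k^{(M)}(\theta_1)\neq p_k^{(M)}(\theta_2)$ for some $k$ while $\theta_1\neq\theta_2$ — shows the Bayes risk of $\th_M^{*}$ is \emph{strictly} below that of every $\th_F$. Unwinding the prior gives $\tfrac12 R(\th_F,\theta_1)+\tfrac12 R(\th_F,\theta_2)>\tfrac12 R(\th_M^{*},\theta_1)+\tfrac12 R(\th_M^{*},\theta_2)$ for all $\th_F$, so each $\th_F$ has $R(\th_F,\theta)>R(\th_M^{*},\theta)$ at $\theta_1$ or at $\theta_2$; thus no $\th_F$ is at least as good as $\th_M^{*}$, i.e.\ $F\npreceq M$. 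Combining the two halves yields $M\prec F$, so $F$ is inadmissible.

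The routine inputs are that the posterior mean minimises expected squared error and the law of total variance. The main obstacle is the $F\npreceq M$ direction: one must package the argument so that a \emph{single} estimator $\th_M^{*}$ beats \emph{every} $\th_F$ at one of the two chosen points, and then verify that the crucial inequality is strict — which is exactly where non-constancy of $\rho(\theta)$, through the information-extracting $M$, enters. The same scheme is expected to extend to general Bregman-divergence loss using that the appropriate generalised mean minimises the expected divergence and the associated bias--variance identity, but least squares is the clean case asserted here.
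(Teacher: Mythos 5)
Your proof is correct, and its skeleton matches the paper's: pick $\theta_1\neq\theta_2$ with $\rho(\theta_1)\neq\rho(\theta_2)$, build a two-outcome Helstrom-type measurement $M$ whose outcome distribution separates them, show $M\preceq F$ by collapsing any $\th_F$ to a constant guess via convexity, and obtain $F\npreceq M$ from the Bayes problem with the uniform two-point prior. Where you genuinely diverge is in how the strict inequality is extracted. The paper routes it through admissibility machinery: Lemma \ref{lem:non-const_estim} gives a non-constant \emph{admissible} estimator $\th_M$ (the unique Bayes estimator for the two-point prior), Lemma \ref{lem:const_admissible} shows every $F$-estimator is weakly beaten by a constant one, and Lemma \ref{lem:same_inadmiss} rules out the equal-risk case. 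You instead compare Bayes risks quantitatively: the law of total variance gives the Bayes risk of $\th_M^{*}$ as $\mathrm{Var}_\pi(\theta)-\mathrm{Var}_k(\mu_k)$, which is strictly below the floor $\mathrm{Var}_\pi(\theta)$ that every $\th_F$ must respect, so a single estimator on $M$ strictly beats all of $F$'s estimators on average over $\{\theta_1,\theta_2\}$ and hence at one of the two points. Your version is more self-contained and makes the source of strictness explicit (the positive between-outcome variance of the posterior means, which is exactly where $p_1^{(M)}(\theta_1)\neq p_1^{(M)}(\theta_2)$ enters); the paper's version is tailored to reuse the same lemmas for the refineable-measurement result and extends verbatim to general Bregman divergences. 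Two cosmetic points: convexity of $\Theta$ is not assumed in the lemma, but you do not need $\bar\theta_F\in\Theta$ since the paper allows estimators to take arbitrary real values and least-squares loss is defined everywhere; and you should note that both outcomes of $M$ have positive marginal probability under $\pi$ (which follows from $p_1^{(M)}(\theta_1)\neq p_1^{(M)}(\theta_2)$), so that $\mathrm{Var}_k(\mu_k)>0$ is indeed justified.
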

\begin{proof}[Proof Sketch]
	There are two main ideas in this proof. First, we show that the only admissible estimators for $F$ are constant estimators (described above) and thus any measurement $M$ satisfies $M\preceq F$. Second, we find an $M$ dominating $F$ using the Bayesian framework.
\end{proof}

Note both results Lemmas show that the identity measurement (i.e. doing nothing and just guessing) is admissible iff. the state $\rho$ is constant. Thus no matter how weakly a system varies, there is always a measurement that extracts useful information from it. 

\section{Conclusion}
In Conclusion, we have defined the most general way in which the performance of quantum measurements for parameter estimation can be compared. As we have argued, any other comparison would require additional modelling assumptions. Remarkably, we demonstrated a class of parametrised states---classical ones---that admit optimal measurements, even at this level of generality. Under some more strict assumptions, we have shown that these are the only such states. 

Since only a very restrictive class of parametrised states have optimal measurements, we proposed several criteria for when a measurement may be considered approximately optimal. A further direction of research that appears interesting is characterising when a state has an approximately optimal measurement---we have given a selection of sufficient conditions but, as argued, they are not necessary. In particular, understanding approximately optimal measurements in the asymptotic regime could provide a new perspective on well established ideas, such as the quantum Fisher information.

Finally, we have demonstrated two inadmissible classes of measurements---those that can be refined or those that do not extract information from the quantum state of interest. Another possible further direction of research would be to attempt to characterise the admissible estimators for a given parametrised quantum state.

\section*{Acknowledgments}
The authors wish to thank N. Mertig, J. Smith and C. Long for their useful discussions and comments. W.S. was supported by the EPSRC and Hitachi. S.S. acknowledges support from the Royal Society University Research Fellowship. D.R.M.A.-S. was supported by the Lars Hierta’s Memorial Foundation, and Girton College.

\printbibliography

@PREAMBLE{
	"\providecommand{\noopsort}[1]{}" 
	# "\providecommand{\singleletter}[1]{#1}%" 
}

@article{Demk2020,
	doi = {10.1088/1751-8121/ab8ef3},
	url = {https://doi.org/10.1088/1751-8121/ab8ef3},
	year = 2020,
	month = {8},
	publisher = {{IOP} Publishing},
	volume = {53},
	number = {36},
	pages = {363001},
	author = {Rafa{\l} Demkowicz-Dobrza{\'{n}}ski and Wojciech G{\'{o}}recki and M{\u{a}}d{\u{a}}lin Gu{\c{t}}{\u{a}}},
	title = {Multi-parameter estimation beyond quantum Fisher information},
	journal = {Journal of Physics A: Mathematical and Theoretical},
}

@article{Brau1994,
	title = {Statistical distance and the geometry of quantum states},
	author = {Braunstein, Samuel L. and Caves, Carlton M.},
	journal = {Phys. Rev. Lett.},
	volume = {72},
	issue = {22},
	pages = {3439--3443},
	numpages = {0},
	year = {1994},
	month = {5},
	publisher = {American Physical Society},
	doi = {10.1103/PhysRevLett.72.3439},
	url = {https://link.aps.org/doi/10.1103/PhysRevLett.72.3439}
}

@book{Neil2010, 
	 place={Cambridge},
	 title={Quantum Computation and Quantum Information: 10th Anniversary Edition},
	 doi={10.1017/CBO9780511976667}, 
	 publisher={Cambridge University Press}, 
	 author={Nielsen, Michael A. and Chuang, Isaac L.}, 
	 year={2010}
 }

@article{Alba2020,
	title = {A perspective on multiparameter quantum metrology: From theoretical tools to applications in quantum imaging},
	journal = {Physics Letters A},
	volume = {384},
	number = {12},
	pages = {126311},
	year = {2020},
	issn = {0375-9601},
	doi = {https://doi.org/10.1016/j.physleta.2020.126311},
	url = {https://www.sciencedirect.com/science/article/pii/S0375960120301109},
	author = {F. Albarelli and M. Barbieri and M.G. Genoni and I. Gianani},
	keywords = {Quantum estimation, Quantum sensing, Quantum imaging, Superresolution imaging},
}

@book{Lehm1998,
	author       = {Eric L. Lehmann and George Casella},
	year         = {1998},
	title        = {Theory of Point Estimation},
	publisher    = {Springer, New York, NY},
	doi          = {https://doi.org/10.1007/b98854},
}

@article{Kull1951,
	author = {Kullback, S. and Leibler, R. A.},
	description = {Kullback , Leibler : On Information and Sufficiency},
	doi = {10.1214/aoms/1177729694},
	journal = {The Annals of Mathematical Statistics},
	month = mar,
	number = 1,
	pages = {79--86},
	publisher = {Institute of Mathematical Statistics},
	title = {On Information and Sufficiency},
	url = {https://doi.org/10.1214%2Faoms%2F1177729694},
	volume = 22,
	year = 1951
}

@article{Bhat1946,
	ISSN = {00364452},
	URL = {http://www.jstor.org/stable/25047882},
	author = {A. Bhattacharyya},
	journal = {Sankhyā: The Indian Journal of Statistics (1933-1960)},
	number = {4},
	pages = {401--406},
	publisher = {Springer},
	title = {On a Measure of Divergence between Two Multinomial Populations},
	volume = {7},
	year = {1946}
}

@book{Wass2004,
	author 		= {Larry Wasserman},
	year		= 2004,
	title 		= {All of Statistics},
	publisher	= {Springer, New York, NY},
	pages		= {202},
	doi         = {https://doi.org/10.1007/978-0-387-21736-9},
}

@article{Hels1969,
	author		= {Carl W. Helstrom},
	journal		= {Journal of Statistical Physics},
	month 		= jun,
	year 		= 1969,
	title	 	= {Quantum detection and estimation theory},
	doi			= {https://doi.org/10.1007/BF01007479},
	volume		= 1,
	pages		= {231--252},
}

@article{Stei1961,
	author 		= {Charles Stein and W. James},
	journal		= {Proceedings of the Fourth Berkeley Symposium on Mathematical Statistics and Probability},
	year		= 1961,
	title		= {Estimation with Quadratic Loss},
	volume		= 1,
	pages		={361--379},
}

@ARTICLE{Bane2005,
	author={Banerjee, A. and Xin Guo and Hui Wang},
	journal={IEEE Transactions on Information Theory}, 
	title={On the optimality of conditional expectation as a Bregman predictor}, 
	year={2005},
	volume={51},
	number={7},
	pages={2664-2669},
	doi={10.1109/TIT.2005.850145}
}

@ARTICLE{Moso2013,
	author = {Mosonyi, M. and Ogawa, T.},
	journal = {Communications in Mathematical Physics},
	title = {Quantum Hypothesis Testing and the Operational Interpretation of the Quantum Rényi Relative Entropies},
	year = {2015},
	volume = {334},
	number = {3},
	doi = {10.1007/s00220-014-2248-x}
}

@article{Gore2022,
  title = {Multiparameter quantum metrology in the Heisenberg limit regime: Many-repetition scenario versus full optimization},
  author = {G\'orecki, Wojciech and Demkowicz-Dobrza\ifmmode \acute{n}\else \'{n}\fi{}ski, Rafa\l{}},
  journal = {Phys. Rev. A},
  volume = {106},
  issue = {1},
  pages = {012424},
  numpages = {20},
  year = {2022},
  month = {7},
  publisher = {American Physical Society},
  doi = {10.1103/PhysRevA.106.012424},
  url = {https://link.aps.org/doi/10.1103/PhysRevA.106.012424}
}

@ARTICLE{Fuch2009,
	author = {Fuchs,Christopher A.  and Schack,Rüdiger },
	title = {Priors in Quantum Bayesian Inference},
	journal = {AIP Conference Proceedings},
	volume = {1101},
	number = {1},
	pages = {255-259},
	year = {2009},
	doi = {10.1063/1.3109948},
}

@ARTICLE{Giov2011,
	author = {Giovannetti, Vittorio and Lloyd, Seth and Maccone, Lorenzo},
	title = {Advances in quantum metrology},
	journal = {Nature Photonics},
	volume = {5},
	number = {4},
	pages = {222-229},
	year = {2011},
	doi = {10.1038/nphoton.2011.35}
}

@article{Lee2002,
    author = { Hwang   Lee  and  Pieter   Kok  and  Jonathan P.   Dowling },
    title = {A quantum Rosetta stone for interferometry},
    journal = {Journal of Modern Optics},
    volume = {49},
    number = {14-15},
    pages = {2325-2338},
    year  = {2002},
    publisher = {Taylor & Francis},
    doi = {10.1080/0950034021000011536},
    URL = { https://doi.org/10.1080/0950034021000011536},
    eprint = {        https://doi.org/10.1080/0950034021000011536}
}

@article{Smit2022,
  title = {Iterative quantum-phase-estimation protocol for shallow circuits},
  author = {Smith, Joseph G. and Barnes, Crispin H. W. and Arvidsson-Shukur, David R. M.},
  journal = {Phys. Rev. A},
  volume = {106},
  issue = {6},
  pages = {062615},
  numpages = {7},
  year = {2022},
  month = {12},
  publisher = {American Physical Society},
  doi = {10.1103/PhysRevA.106.062615},
  url = {https://link.aps.org/doi/10.1103/PhysRevA.106.062615}
}

\newpage
\appendix

\section{Bayesian estimation}\label{sec:Bayesian}
In this section we introduce Bayesian estimation, which we will need for the proofs of the results from the main text.
 
In a Bayes setting, one has some \textit{a priori} belief about our underlying parameter\cite{Wass2004}. For example, one may believe that it is likely to be centred around some value, and correspondingly very unlikely to be far away from it. Formally, this corresponds to a probability distribution $\pi(\theta)$ on our parameter space. A common choice is a normally distributed prior $\mathcal{N}(\mu,\Sigma)$; to encode an underlying expectation of our parameter $\mu$ along with decay of probability away from $\mu$ encoded by $\Sigma$. Since one has some probability distribution on our parameter space, one can now talk about the risk of an estimator averaged across different possible values of the parameter rather than at a specific parameter value. That is, one defines the Bayes risk corresponding to $\pi$ as

\begin{equation}
R_{\pi}(\th) = \mathbb{E}_{\theta\sim\pi(\theta)}[R(\th,\theta)] = \int d\theta \pi(\theta) R(\th,\theta).
\end{equation}

An estimator is $\tb$ is said to be Bayes if it minimises the Bayes risk, that is for any other estimator $\th$, $R_{\pi}(\tb) \leq R_{\pi}(\th)$. We note the following fact about Bayesian estimators.

\begin{lemma}\label{lem:uniq_bayes_ad}
	A unique Bayes estimator $\tb$ is admissible
\end{lemma}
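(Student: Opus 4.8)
The plan is to argue by contradiction, exploiting the fact that a dominating estimator cannot have larger Bayes risk. Suppose the unique Bayes estimator $\tb$ (for some prior $\pi$) were inadmissible. Then by definition there is an estimator $\th$ dominating it, so $R(\th,\theta)\le R(\tb,\theta)$ for every $\theta\in\Theta$, with strict inequality for at least one value of $\theta$.

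First I would integrate the pointwise inequality $R(\th,\theta)\le R(\tb,\theta)$ against $\pi$. Since $\tb$ is Bayes its Bayes risk $R_\pi(\tb)$ is finite, so monotonicity of the integral shows that $R_\pi(\th)=\int d\theta\,\pi(\theta)R(\th,\theta)$ is well defined, finite, and satisfies $R_\pi(\th)\le R_\pi(\tb)$. On the other hand, by definition of a Bayes estimator, $R_\pi(\tb)\le R_\pi(\th)$. Combining the two gives $R_\pi(\th)=R_\pi(\tb)$, i.e.\ $\th$ is \emph{also} a Bayes estimator for $\pi$. Invoking uniqueness then forces $\th=\tb$ (equivalently, $R(\th,\cdot)=R(\tb,\cdot)$), which contradicts the requirement in the definition of domination that $R(\th,\theta)<R(\tb,\theta)$ for some $\theta$. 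Hence no dominating estimator exists and $\tb$ is admissible.

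The argument is mostly bookkeeping, and the only point that needs care is the precise sense of "unique Bayes estimator" together with the observation that the strict-inequality point in the definition of domination may be a $\pi$-null set. This is exactly why the proof should pass through "$\th$ is also Bayes" rather than trying to derive a strict inequality $R_\pi(\th)<R_\pi(\tb)$ directly: uniqueness forces $\th$ and $\tb$ to coincide (in whatever sense uniqueness is taken---as functions, $\pi$-almost everywhere, or up to identical risk function), and every such interpretation is incompatible with $\th$ having strictly smaller risk somewhere. I would fix the convention for "unique" at the outset so this last step is unambiguous.
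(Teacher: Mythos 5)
Your proof is correct and follows essentially the same route as the paper's: integrate the pointwise domination inequality against the prior to conclude that the dominating estimator is also Bayes, then invoke uniqueness to reach a contradiction. Your added remarks on the possible $\pi$-null strict-inequality set and the precise sense of uniqueness are a careful refinement of the same argument, not a different one.
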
 
\begin{proof}
	Suppose $\th$ dominates $\tb$, then by definition $R(\th,\theta)\leq R(\tb,\theta)$ and $\th\neq\tb$. But then taking expectation with respect to $\theta$, we see that $R_{\pi}(\th) \leq R_{\pi}(\tb)$. But this contradicts that $\tb$ is the unique minimiser of the Bayes risk.
\end{proof}

In the quantum case, where one has a state $\rho(\theta)$, one must minimise the Bayes risk over estimators based off a fixed measurement and then additionally minimise this across all possible measurements. Thus one searches for a measurement-estimator pair which minimises the Bayes risk. In such a case, both the measurement and estimator are described as Bayes.

In the single parameter case $(\Theta\subseteq \mathbb{R})$, we also introduce some notation: for any measurement-estimator pair $(F, \th_F)$, we define
\begin{equation}
\Lambda = \sum_i F_i \th_F(i), \quad \Lambda_2 = \sum_i F_i \th_F(i)^2,
\end{equation}
Where $F_i$ are the measurement operators of $F$.

For any prior $\pi(\theta)$ on $\Theta$, we define two operators
\begin{equation}
\rb = \int d\theta\pi(\theta) \rho(\theta),\quad \rb' = \int d\theta\pi(\theta) \theta\rho(\theta).
\end{equation}

\section{Proof of Theorem \ref{thm:converse_to_optimality} }\label{sec:conv_to_opt}
In this section, we prove that only classical parametrised quantum states have optimal measurements. We specialise to the case that our parameter $\theta$ can be expressed as a real vector - $\Theta \subseteq \mathbb{R}^N$ and that we use least-squares loss - where $L(\theta_1,\theta_2) = ||\theta_1 - \theta_2||^2$. 

We now present the proof of Theorem \ref{thm:converse_to_optimality} as a series of Lemmas. We first restirct to the single parameter ($N=1$) case. We fix $\theta_1,\theta_2\in\Theta$ and consider the restricted parameter space $\{\theta_1,\theta_2\}\subseteq\Theta$ and show that $\rho(\theta_1)$ and $\rho(\theta_2)$ commute. This is achieved by considering the Bayesian measurements associated with all possible priors on $\{\theta_1,\theta_2\}$. This then shows that if an optimal measurement exists on $\Theta$, all $\{\rho(\theta):\theta\in\Theta\}$ must commute. We use this to simultaneously diagonalise the $\{\rho(\theta)\}$ and thus deduce $\rho$ is classical. By restricting to lines in higher dimension parameter spaces, we use the single parameter case to prove the multiparameter case.

\begin{lemma}\label{lem:optimal_for_Bayes}
	Suppose $\rho(\theta)$ is some parametrised state with an optimal measurement $M$. Then $M$ is a Bayes measurement for any prior $\pi(\theta)$ on $\Theta$.
\end{lemma}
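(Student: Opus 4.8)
The plan is to obtain this as an almost immediate consequence of unwinding the definition of an optimal measurement, together with the fact that integrating a pointwise inequality against a probability measure preserves it. First I would fix a prior $\pi(\theta)$ on $\Theta$ and let $(F^\star,\th_{F^\star})$ be a Bayes measurement–estimator pair for $\pi$, i.e. one attaining the infimum of the Bayes risk $R_{\pi}$ over all measurement–estimator pairs. Since $M$ is optimal we have $M\preceq F^\star$, so applying the definition of $\preceq$ to the estimator $\th_{F^\star}$ yields an estimator $\ttd_M$ based on $M$ with $R(\ttd_M,\theta)\le R(\th_{F^\star},\theta)$ for every $\theta\in\Theta$.

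Next I would integrate this inequality against $\pi$, giving $R_{\pi}(\ttd_M)=\int d\theta\,\pi(\theta)R(\ttd_M,\theta)\le\int d\theta\,\pi(\theta)R(\th_{F^\star},\theta)=R_{\pi}(\th_{F^\star})$. Because $(F^\star,\th_{F^\star})$ was chosen to minimise the Bayes risk, $R_{\pi}(\th_{F^\star})$ is the global minimum, and hence so is $R_{\pi}(\ttd_M)$. Therefore $(M,\ttd_M)$ is itself a Bayes pair, which is exactly the statement that $M$ is a Bayes measurement for $\pi$. Since $\pi$ was arbitrary, this holds for every prior.

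Structurally there is no deep obstacle here; the only point requiring care is the existence of a Bayes pair $(F^\star,\th_{F^\star})$ attaining the infimum of $R_{\pi}$. In the setting in which this Lemma is applied — $\Theta$ a two–point set under least–squares loss — the Bayes estimator for any fixed measurement is the posterior mean, which always exists, and existence of an optimising measurement follows from a standard compactness argument, so I would simply invoke this. For a general $\Theta$ one can instead run the argument with approximate minimisers: for each $\epsilon>0$ pick a pair within $\epsilon$ of $\inf R_{\pi}$ and, by optimality of $M$, extract an $M$–based estimator within $\epsilon$ of it; this shows that the Bayes risk achievable using estimators based on $M$ equals the global infimum of $R_{\pi}$, and hence that $M$ realises a Bayes pair whenever any measurement does.
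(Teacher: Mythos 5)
Your proposal is correct and follows essentially the same route as the paper: take a Bayes pair $(F^\star,\th_{F^\star})$, use $M\preceq F^\star$ to extract an $M$-based estimator dominating it pointwise, and integrate against the prior to conclude that this estimator is also Bayes. Your additional remarks on the existence of a Bayes pair (and the approximate-minimiser workaround) are a sensible tightening of a step the paper leaves implicit, but they do not change the argument.
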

\begin{proof}
	Suppose we have some Bayes estimator $\ttd^B_F$. Then since $M$ is optimal, $M\preceq F$, and thus we can find an estimator $\th_M$ satisfying $\th_M\leq \ttd^B_F$. But then, by definition, $\th_M$ must also be Bayes. Thus for any prior, we can always measure with $M$ and pick a Bayes estimator based on the resulting probability distribution.
\end{proof}

\begin{lemma}\label{lem:bayesian_conditions}
	Suppose $\rho(\theta)$ is some single parameter state $(\Theta\subseteq \mathbb{R})$ under least-squares loss. Suppose further we have some prior $\pi(\theta)$ on $\theta$ where $\rb$ has full rank. Then a measurement $F=\{F_i\}$, $F$ is Bayesian iff. the following hold
	\begin{enumerate}[label=(\roman*)]
		\item{
			$F$ is the projective measurement onto $\Lambda$'s eigenspaces, or a fine-graining of it.
		}
		\item{
			$\{\Lambda,\rb\} = 2\rb'$
		}
	\end{enumerate}
\end{lemma}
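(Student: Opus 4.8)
The plan is to reduce the optimisation over measurement--estimator pairs to a single operator optimisation, bound it with a Bayesian analogue of the Cram\'er--Rao argument, and observe that the equality conditions of that bound are exactly (i) and (ii). First I would fix the measurement $F$ and use that, under least-squares loss, the unique minimiser of the Bayes risk over estimators based on a \emph{fixed} $F$ is the posterior mean $\th_F(i) = \Tr(\rb' F_i)/\Tr(\rb F_i)$; since $\rb$ has full rank, the denominator is nonzero whenever $F_i\neq 0$, so this is well defined on every outcome. Hence $F$ is Bayesian precisely when this $\th_F$ makes the pair globally optimal, and substituting it into $R_{\pi}(\th_F)$ gives $R_{\pi}(\th_F) = \int d\theta\,\pi(\theta)\theta^2 - \sum_i \Tr(\rb' F_i)^2/\Tr(\rb F_i)$. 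Therefore $F$ is Bayesian iff it maximises $J(F):=\sum_i \Tr(\rb' F_i)^2/\Tr(\rb F_i)$ over all POVMs.

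Next I would introduce $\Lambda_0$ as the unique Hermitian solution of the Lyapunov equation $\{\Lambda_0,\rb\} = 2\rb'$ (existence and uniqueness from $\rb>0$) and prove $J(F)\le \Tr(\Lambda_0\rb\Lambda_0)$. The two ingredients are the identity $\Tr(\rb' F_i) = \mathrm{Re}\,\Tr(\Lambda_0\rb F_i)$, which follows from $2\rb' = \Lambda_0\rb+\rb\Lambda_0$ together with $\Tr(\rb\Lambda_0 F_i) = \overline{\Tr(\Lambda_0\rb F_i)}$, and the Hilbert--Schmidt Cauchy--Schwarz inequality applied to $\rb^{1/2}\Lambda_0 F_i^{1/2}$ and $\rb^{1/2}F_i^{1/2}$, which yields $|\Tr(\Lambda_0\rb F_i)|^2 \le \Tr(\Lambda_0\rb\Lambda_0 F_i)\,\Tr(\rb F_i)$; dividing by $\Tr(\rb F_i)$ and summing over $i$ with $\sum_i F_i = \mathbbm{1}$ gives the bound. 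To be certain this bound equals the minimal Bayes risk rather than merely an upper estimate, I would check that the projective measurement onto the eigenspaces of $\Lambda_0$ saturates every inequality above --- a special case of the equality analysis below, so the reasoning is not circular.

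The equality analysis is the crux. Equality for a given outcome $i$ requires the Cauchy--Schwarz step to be tight, i.e.\ $\rb^{1/2}\Lambda_0 F_i^{1/2}$ proportional to $\rb^{1/2}F_i^{1/2}$; cancelling the invertible $\rb^{1/2}$ (the one place full rank of $\rb$ is genuinely used) this says $F_i^{1/2}$, hence $F_i$, is supported on a single eigenspace of $\Lambda_0$, with eigenvalue $\lambda_i$ say. A short computation then gives $\Tr(\Lambda_0\rb F_i) = \lambda_i\Tr(\rb F_i)$, which is real, so the remaining inequality $(\mathrm{Re}\,z)^2\le|z|^2$ is also tight and, moreover, the posterior mean is forced to be $\th_F(i) = \lambda_i$. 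Thus $F$ is Bayesian iff every $F_i$ lies in an eigenspace of $\Lambda_0$, i.e.\ $F$ is the projective measurement onto $\Lambda_0$'s eigenspaces or a fine-graining of it; and then $\Lambda := \sum_i F_i\th_F(i) = \sum_i \lambda_i F_i = \Lambda_0$, so the operator $\Lambda$ appearing in (i)--(ii) coincides with $\sum_i F_i\th_F(i)$ in the paper's notation, (ii) holds by construction, and (i) reads ``$F$ is a fine-graining of the spectral measure of $\Lambda$''. Conversely, if some estimator makes (i)--(ii) hold, then (ii) forces $\sum_i F_i\th_F(i)$ to equal the unique Lyapunov solution $\Lambda_0$, (i) places every $F_i$ in an eigenspace of it, so all the equality conditions are met and $F$ attains $J(F) = \Tr(\Lambda_0\rb\Lambda_0)$, i.e.\ $F$ is Bayesian.

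The main obstacle I expect is making the equality analysis watertight: one must verify that single-eigenspace support of each $F_i$ simultaneously saturates Cauchy--Schwarz, renders $\Tr(\Lambda_0\rb F_i)$ real, and pins the posterior mean to the corresponding eigenvalue, so that (i) and (ii) are forced \emph{jointly} --- neither alone characterises Bayesianity. Secondary care is needed to derive attainability of the bound from the same equality computation rather than assuming it, and to handle the existential quantifier over estimators cleanly, since ``$\Lambda$'' in the statement is the operator $\sum_i F_i\th_F(i)$ and only the posterior-mean estimator can witness Bayes-optimality. The remaining manipulations --- the posterior-mean formula, well-posedness of the Lyapunov equation, and the Cauchy--Schwarz bookkeeping --- are routine.
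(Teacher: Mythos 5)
Your proof is correct, but it reaches the result by a genuinely different route from the paper's. The paper keeps the estimator explicit throughout: it writes the Bayes risk as $\Tr(\Lambda_2\rb)-2\Tr(\Lambda\rb')+\int d\theta\,\pi(\theta)\theta^2$ with $\Lambda=\sum_i F_i\th(i)$ and $\Lambda_2=\sum_i F_i\th(i)^2$, obtains condition (i) as the saturation condition of the operator inequality $\Lambda_2\geq\Lambda^2$ traced against the full-rank $\rb$, and then gets condition (ii) by minimising $\Tr(\Lambda^2\rb)-2\Tr(\Lambda\rb')$ over Hermitian $\Lambda$. You instead eliminate the estimator first (posterior mean), reduce Bayes-optimality of $F$ to maximising $J(F)=\sum_i\Tr(\rb' F_i)^2/\Tr(\rb F_i)$, and bound $J$ by $\Tr(\Lambda_0\rb\Lambda_0)$ via a per-outcome Hilbert--Schmidt Cauchy--Schwarz against the Lyapunov solution $\Lambda_0$ of $\{\Lambda_0,\rb\}=2\rb'$ --- essentially the Personick-style derivation of the Bayesian quantum Cram\'er--Rao bound. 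Both arguments invoke full rank of $\rb$ at the same structural point (you cancel the invertible $\rb^{1/2}$; the paper uses $\Tr(\rb A)=0\Rightarrow A=0$ for $A\geq 0$), and both equality analyses land on ``each $F_i$ supported on a single eigenspace.'' What your route buys: attainability of the optimum is verified rather than assumed, conditions (i) and (ii) emerge jointly from a single equality analysis, and the existential quantifier over estimators is handled cleanly since only the posterior mean can witness optimality. What the paper's route buys: the two conditions are derived in two separate, independently interpretable steps, and there is no need to introduce $\Lambda_0$ before knowing it is optimal. The verifications you flag --- tightness of each link in the chain, reality of $\Tr(\Lambda_0\rb F_i)$ on eigenspace-supported $F_i$, and the identification $\sum_i\lambda_i F_i=\Lambda_0$ (which follows because $\sum_{i:\lambda_i=\lambda}F_i$ must equal the spectral projector of $\Lambda_0$ for each eigenvalue $\lambda$) --- all go through.
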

\begin{proof}
	Note that this lemma, and most of this proof, is presented in \cite{Demk2020} as a sufficient condition. We prove that the conditions are also necessary.
	
	Fix some measurement-estimator pair $(F,\th_F)$. Note that we may write the Bayesian risk in terms of $\Lambda$ and $\Lambda_2$ (in the case of least-squares loss):
	
	\begin{align}
	R_{\pi}(\th_F) &= \int d\theta \sum_i \pi(\theta) \Tr(F_i\rho(\theta)) (\th_F(i)-\theta)^2,\\
	&=  \Tr(\Lambda_2 \rb) -2 \Tr (\Lambda \rb') + \int d\theta \pi(\theta)\theta^2.
	\end{align}
	We note that
	\begin{align}
	0 &\leq \sum_i (\th(i)\mathbbm{1} - \Lambda)^\dag F_i(\th(i)\mathbbm{1} - \Lambda),\\
	&= \Lambda_2 - \Lambda^2,
	\end{align}
	and thus $\Tr(\Lambda_2 \rb) \geq \Tr (\Lambda^2 \rb)$. Since we want to minimise the Bayes risk, we now check when this inequality is saturated. Note that for any positive semi-definite operator $A$,
	\begin{align}
	\Tr(\rb A) = 0 &\Leftrightarrow \Tr((\sqrt{\rb}\sqrt{A})^\dag\sqrt{\rb}\sqrt{A}) = 0,\\
	&\Leftrightarrow \sqrt{\rb}\sqrt{A} = 0,\\
	&\Leftrightarrow A = 0,
	\end{align}
	where we used the assumption that $\rb$ is full rank, and hence $\sqrt{\rb}$, is invertible. Thus we saturate the inequality iff. $\sum_i (\th(i) - \Lambda)^\dag F_i(\th(i) - \Lambda)=0$. But since this is a sum of positive semi-definite operators, this can only happen if, for each $i$, $(\th(i) - \Lambda)^\dag F_i(\th(i) - \Lambda) = 0$. If $\th(i)$ is not an eigenvalue of $\Lambda$, then $F_i=0$. Otherwise, $F_i$ must only have support on the $\th(i)$ eigenspace of $\Lambda$. Since $\Lambda = \sum F_i \th(i)$, condition $(i)$ follows.
	
	On saturation of the inequality, we see that finding a Bayes measurement reduces to minimising the quantity
	\begin{equation}
	\mathcal{C} = \Tr(\Lambda^2 \rb) -2 \Tr (\Lambda \rb'),
	\end{equation}
	over Hermitian operators $\Lambda$. By differentiating with respect to $\Lambda$ and setting the derivative to zero, we see that there is a unique minimum and that it satisfies the equation 
	\begin{equation}
	\frac{1}{2} \{ \Lambda, \rb\} = \rb'.
	\end{equation}
\end{proof}

\begin{lemma}\label{lem:full_rank_case}
	Suppose $\rho(\theta)$ is some single parameter state $(\Theta\subseteq \mathbb{R})$ with an optimal measurement $M$ under least-squares loss. Fix $\theta_1,\theta_2\in\Theta$ distinct. If $\rho(\theta_1),\, \rho(\theta_2)$ both have full rank, then there exist simultaneously diagonalisable Hermitian maps $\Gamma, \Gamma'$ satisfying
	\begin{equation}
	\{\Gamma,\rho_2\}=\rho_1,\quad \{\Gamma', \rho_1\} = \rho_2.
	\end{equation}
\end{lemma}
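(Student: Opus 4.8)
The plan is to pass to the two–point parameter space $\{\theta_1,\theta_2\}$ and use that an optimal $M$ must, by Lemma~\ref{lem:optimal_for_Bayes}, be Bayes for \emph{every} prior on it. Write $\rho_j:=\rho(\theta_j)$ and, for $p\in[0,1]$, let $\pi_p$ be the prior assigning mass $p$ to $\theta_1$ and $1-p$ to $\theta_2$, so that $\rb_p=p\rho_1+(1-p)\rho_2$ and $\rb'_p=p\theta_1\rho_1+(1-p)\theta_2\rho_2$; since $\rho_1,\rho_2$ have full rank, $\rb_p>0$ for all $p\in[0,1]$. First I would apply Lemma~\ref{lem:bayesian_conditions}: a Bayes estimator $\th_M^p$ for $M$ at $\pi_p$ gives $\Lambda_p:=\sum_i M_i\th_M^p(i)$ obeying $\{\Lambda_p,\rb_p\}=2\rb'_p$ together with the structural condition that $M$ is the spectral measurement of $\Lambda_p$, or a fine–graining of it. Because the Lyapunov equation $\{X,\rb_p\}=2\rb'_p$ has a unique Hermitian solution when $\rb_p>0$, the operator $\Lambda_p$ is determined by $\rb_p,\rb'_p$ alone (independently of the possibly non–unique $\th_M^p$); hence $p\mapsto\Lambda_p$ is real–analytic on $[0,1]$ (for instance via $\Lambda_p=2\int_0^\infty e^{-t\rb_p}\rb'_p\,e^{-t\rb_p}\,dt$, or by the implicit function theorem), and by uniqueness $\Lambda_0=\theta_2\mathbbm{1}$ and $\Lambda_1=\theta_1\mathbbm{1}$.

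Next I would extract commutativity of the whole family. The structural condition forces each POVM element $M_i$ to be supported inside a single eigenspace of $\Lambda_p$, whence $[\Lambda_p,M_i]=0$ for every $i$ and every $p\in(0,1)$ (for $p\in\{0,1\}$ this is vacuous, $\Lambda_p$ being scalar). Since each $\Lambda_q$ is a real linear combination of the fixed operators $\{M_i\}$ — by construction for $q\in(0,1)$, trivially for $q\in\{0,1\}$ — it follows that $\{\Lambda_q\}_{q\in[0,1]}$ is a mutually commuting family. In particular $[\,\Lambda_h-\Lambda_0,\ \Lambda_1-\Lambda_{1-k}\,]=0$ for small $h,k>0$; dividing by $hk$ and letting $h,k\to0^+$ gives $[\dot\Lambda_0,\dot\Lambda_1]=0$, where dots denote $p$–derivatives.

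To finish, I would differentiate $\{\Lambda_p,\rb_p\}=2\rb'_p$ in $p$, obtaining $\{\dot\Lambda_p,\rb_p\}+\{\Lambda_p,\rho_1-\rho_2\}=2(\theta_1\rho_1-\theta_2\rho_2)$. Evaluating at $p=0$ with $\Lambda_0=\theta_2\mathbbm{1}$ gives $\{\dot\Lambda_0,\rho_2\}=2(\theta_1-\theta_2)\rho_1$, and at $p=1$ with $\Lambda_1=\theta_1\mathbbm{1}$ gives $\{\dot\Lambda_1,\rho_1\}=2(\theta_1-\theta_2)\rho_2$. Since $\theta_1\neq\theta_2$, set $\Gamma:=\dot\Lambda_0/[2(\theta_1-\theta_2)]$ and $\Gamma':=\dot\Lambda_1/[2(\theta_1-\theta_2)]$; both are Hermitian (limits of difference quotients of Hermitian operators), they satisfy $\{\Gamma,\rho_2\}=\rho_1$ and $\{\Gamma',\rho_1\}=\rho_2$, and they commute by the previous paragraph, hence are simultaneously diagonalisable, as required.

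The hard part will be the middle step: showing that the entire one–parameter family $\{\Lambda_p\}$ commutes. This rests on using the fine–graining clause of Lemma~\ref{lem:bayesian_conditions} carefully — each $M_i$ must lie within a single eigenspace of $\Lambda_p$ — together with the fact that every $\Lambda_q$ is assembled from the same fixed $\{M_i\}$, and on checking that commutativity survives passage to the (one–sided) $p$–derivatives at the endpoints. The smoothness of $p\mapsto\Lambda_p$ is the other point needing a clean justification, but the Lyapunov integral representation (with $\rb_p$ uniformly bounded below on $[0,1]$) handles it.
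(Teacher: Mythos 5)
Your proof is correct and follows essentially the same route as the paper's: reduce to the two-point prior, invoke Lemma~\ref{lem:optimal_for_Bayes} and Lemma~\ref{lem:bayesian_conditions} for every $p$ so that the fixed POVM forces the family $\{\Lambda_p\}$ to be simultaneously diagonalisable, and recover $\Gamma,\Gamma'$ from the $p\to0$ and $p\to1$ limits. The only difference is in execution: where the paper expands $\Lambda_p=\theta_2\mathbbm{1}+2p(\theta_1-\theta_2)\Gamma+\mathcal{O}(p^2)$ and argues on eigenvectors, you differentiate the uniquely solvable Lyapunov equation at the endpoints, which handles the regularity of $p\mapsto\Lambda_p$ somewhat more cleanly.
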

\begin{proof}
	For notaional convenience, let $\rho_i=\rho(\theta_i)$ for $i=1,2$. For $p\in(0,1)$ fix a prior on $\Theta$ where we are given $\rho_1$ with probability $p$ and $\rho_2$ with probability $1-p$. Note that for any $p\in(0,1)$, $\rb$ has full rank. By Lemma \ref{lem:optimal_for_Bayes} we see that $M$ must be Bayesian for any value of $p$ - let $\th_M^p$ be a Bayes estimator for a fixed value of $p$. Then $\Lambda = \sum_i M_i \th_M^p(i)$ and $M$ must satisfy condition $(i)$ of Lemma \ref{lem:bayesian_conditions}. Then, as a function of $p$ we may expand $\Lambda=\sum_k \mu_k(p)\dyad{k}$ - in an eigenbasis that does not depend on $p$.
	
	By condition $(ii)$ of Lemma \ref{lem:bayesian_conditions}, for any value of $p\in(0,1)$ we know that $\Lambda$ satisfies
	\begin{equation}\label{eqn:optimal_lambda}
	\frac{1}{2} \{ \Lambda, p\rho_1 + (1-p)\rho_2 \} = p\theta_1\rho_1 + (1-p)\theta_2\rho_2.
	\end{equation}
	As well as being invertible, the smallest eigenvalue of $\rb$, $\lambda_{\min}$ is bounded below by the minimum of the eigenvalues of $\rho_1$ and $\rho_2$.
	Consider equation \eqref{eqn:optimal_lambda} in the limit of $p\to 0$. Let $\Gamma$ be defined implicitly as the solution to the equation
	\begin{equation}\label{eqn:def_gamma}
	\{\Gamma,\rho_2\} = \rho_1.
	\end{equation}
	Note since $\rho_1$ is invertible, by expanding equation \eqref{eqn:def_gamma} in the eigenbasis of $\rho_2$, we see that $\Gamma$ is well defined and is furthermore Hermitian.
	
	For some ``remainder" matrix $R$, we expand $\Lambda$ as $\Lambda = \theta_2 \mathbbm{1} + 2p(\theta_1-\theta_2)\Gamma + R$. Substituting this ansatz into equation \eqref{eqn:optimal_lambda}, we see that $R$ must satisfy the equation
	\begin{equation}
	\frac{1}{2}\{R,\rb\} = p^2(\theta_2-\theta_1)\{\Gamma,\rho_1-\rho_2\}.
	\end{equation}
	Expanding in an eigenbasis of $\rb$ and recalling that all the eigenvalues of $\rb$ are bounded below by $\lambda_{\min}$ (which is independent of $p$), we see that $R=\mathcal{O}(p^2)$.
	Consider an eigenstate $\ket{\psi}$ of $\Lambda$ with eigenvalue $\mu(p)$, then note that
	\begin{equation}
	\norm{\Gamma\ket{\psi} - \frac{\mu(p)-\theta_2}{2p(\theta_1-\theta_2)}\ket\psi} = \mathcal{O}(p)
	\end{equation}
	Taking the limit as $p\to 0$, we see that $\ket\psi$ must be an eigenstate of $\Gamma$. But, by symmetry, in the limit $p \to 1$, we may define $\Gamma'$ by the equation
	\begin{equation}\label{eqn:def_gamma'}
	\{\Gamma',\rho_1\} = \rho_2,
	\end{equation}
	and $\ket\psi$ must also be an eigenstate of $\Gamma'$. Thus $\Gamma$ and $\Gamma'$ must be simultaneously diagonalisable in $\Lambda$'s eigenbasis. 
\end{proof}

\begin{lemma}\label{lem:solutions_commute}
	Suppose $A,B$ are positive definite and $\Gamma, \Gamma'$ are simultaneously diagonalisable Hermitian maps satisfying
	\begin{equation}
	\{\Gamma,A\}=B,\quad \{\Gamma', B\} = A.\label{eqn:anti_system}
	\end{equation}
	Then $A$ and $B$ commute.
\end{lemma}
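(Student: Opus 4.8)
The plan is to pass to a common eigenbasis of $\Gamma$ and $\Gamma'$ and read off the two anticommutator relations entrywise. Since $\Gamma,\Gamma'$ are simultaneously diagonalisable Hermitian maps, fix an orthonormal basis $\{\ket{k}\}$ with $\Gamma\ket{k}=\gamma_k\ket{k}$ and $\Gamma'\ket{k}=\gamma'_k\ket{k}$, all $\gamma_k,\gamma'_k\in\mathbb{R}$. Writing $A_{kl}=\bra{k}A\ket{l}$ and $B_{kl}=\bra{k}B\ket{l}$, the relation $\{\Gamma,A\}=B$ reads $(\gamma_k+\gamma_l)A_{kl}=B_{kl}$ and $\{\Gamma',B\}=A$ reads $(\gamma'_k+\gamma'_l)B_{kl}=A_{kl}$, for all $k,l$. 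Substituting the first into the second collapses everything to the single scalar constraint
\begin{equation}
(\gamma_k+\gamma_l)(\gamma'_k+\gamma'_l)\,A_{kl}=A_{kl}\qquad\text{for all }k,l.
\end{equation}

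Next I would use positive definiteness of $A$. Since $A>0$ every diagonal entry satisfies $A_{kk}>0$, so taking $k=l$ above forces $(2\gamma_k)(2\gamma'_k)=1$, i.e. $\gamma'_k=\tfrac{1}{4\gamma_k}$ (in particular $\gamma_k\neq0$) for every $k$. Now take an off-diagonal pair $k\neq l$ with $A_{kl}\neq0$; dividing by $A_{kl}$ gives $(\gamma_k+\gamma_l)(\gamma'_k+\gamma'_l)=1$, and inserting $\gamma'_k=\tfrac{1}{4\gamma_k}$, $\gamma'_l=\tfrac{1}{4\gamma_l}$ and clearing denominators yields $(\gamma_k+\gamma_l)^2=4\gamma_k\gamma_l$, that is $(\gamma_k-\gamma_l)^2=0$. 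Hence $A_{kl}\neq0$ implies $\gamma_k=\gamma_l$.

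Finally I would translate this back to operator form: the last statement says $A_{kl}=0$ whenever $\gamma_k\neq\gamma_l$, which is exactly $[A,\Gamma]=0$ (indeed $[A,\Gamma]_{kl}=(\gamma_l-\gamma_k)A_{kl}=0$ in either case). Then $B=\{\Gamma,A\}=2\Gamma A$, so $[A,B]=2[A,\Gamma A]=2[A,\Gamma]A=0$, as claimed. I do not expect a genuine obstacle here — it is a short entrywise computation once the common eigenbasis is fixed — but the step that does the real work is the diagonal one: positive definiteness of $A$ is what makes every $A_{kk}$ nonzero and thereby pins $\gamma'_k=\tfrac{1}{4\gamma_k}$, after which the off-diagonal constraint algebraically forces $(\gamma_k-\gamma_l)^2=0$ on the support of $A$. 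Note that both anticommutator relations are needed (neither alone constrains the eigenvalues), while positivity of $B$ is not used.
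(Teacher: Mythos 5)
Your proof is correct and follows essentially the same route as the paper: pass to the common eigenbasis of $\Gamma,\Gamma'$, read off the entrywise relations, use $A_{kk}>0$ on the diagonal to pin the eigenvalues, and deduce from the off-diagonal constraint that $A_{kl}\neq 0$ forces $\gamma_k=\gamma_l$ (the paper phrases this as $A_{ii}/B_{ii}=A_{jj}/B_{jj}$ via the identity $x+1/x=2\Rightarrow x=1$, which is equivalent). Your finish is marginally cleaner --- you conclude $[A,\Gamma]=0$ and hence $[A,B]=0$ directly, whereas the paper runs a graph-connectivity argument to show $A$ and $B$ are proportional on each block before concluding commutation.
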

\begin{proof}
	Let $\Gamma$ and $\Gamma'$ be simultaneously diagonalisable in the orthonormal basis $\{\ket i\}_{i=1}^n$. Let $\Gamma \ket i = \lambda_i \ket i$, $\Gamma' \ket i = \mu_i \ket  i$, $\bra{i}A\ket{j} = A_{ij}$ and $\bra{i}B\ket{j} = B_{ij}$, for $i,j=1,\dots,n$. By positive definiteness, note that $A_{ii},B_{ii}>0$ for any $i$.
	Rewriting equations \eqref{eqn:anti_system} in components of this basis, we reach the set of equations
	\begin{equation}\label{eqn:comps}
	(\lambda_i + \lambda_j)A_{ij} = B_{ij},\quad  (\mu_i + \mu_j)B_{ij} = A_{ij}.
	\end{equation}
	Setting $i=j$ we deduce that 
	\begin{equation}\label{eqn:evals}
	\lambda_i = B_{ii}/2A_{ii}, \quad \mu_i = A_{ii}/2B_{ii}.
	\end{equation}
	Note equation \eqref{eqn:comps} shows $A_{ij} = 0 \Leftrightarrow B_{ij} = 0$. Suppose $A_{ij}\neq 0$, then multiplying the two individual equations in \eqref{eqn:comps} and dividing by $A_{ij}B_{ij}$ we see that
	\begin{align}
	&\left(A_{ii}/2B_{ii} +A_{jj}/2B_{jj}\right)(B_{ii}/2A_{ii} + B_{jj}/2A_{jj}) = 1 \nonumber\\
	&\Leftrightarrow \frac{A_{ii}B_{jj}}{B_{ii}A_{jj}} + \frac{B_{ii}A_{jj}}{A_{ii}B_{jj}} = 2.
	\end{align}
	But $x+1/x = 2$ iff. $x=1$ and thus we deduce 
	\begin{equation}\label{eqn:ratio}
	\frac{A_{ii}}{B_{ii}} = \frac{A_{jj}}{B_{jj}}
	\end{equation}
	
	Consider a graph $G$ on $n$ nodes, with an edge between $i$ and $j$ iff. $A_{ij}\neq0$. If $G$ is connected, then for any nodes $i,j$ in $G$ there is path between them. Assuming connectedness, we apply the result of equation \eqref{eqn:ratio} along the path, we deduce that $A_{ii}B_{jj}=B_{ii}A_{jj}$ for every $i,j=1,\dots,n$. But then summing over $j$, we deduce that $A_{ii} = \frac{\Tr(A)}{\Tr(B)}B_{ii}$. Substituting this into equations \eqref{eqn:comps} and \eqref{eqn:evals}, we see that $A$ and $B$ are proportional and thus commute.
	If $G$ is not connected, then we can apply the above procedure to each connected component of $G$. The result is that $A$ and $B$ are block diagonal, with the diagonal matrix entries proportional and thus $A$ and $B$ commute.	
\end{proof}

\begin{lemma}\label{lem:states_commute}
	Suppose $\rho(\theta)$ is some single parameter state $(\Theta\subseteq \mathbb{R})$ with an optimal measurement $M$ under least-squares loss. Fix $\theta_1,\theta_2\in\Theta$ distinct, then $\rho(\theta_1),\, \rho(\theta_2)$ commute (regardless of their rank).
\end{lemma}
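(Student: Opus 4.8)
The plan is to push the Bayesian analysis behind Lemmas \ref{lem:optimal_for_Bayes}--\ref{lem:bayesian_conditions} through in the spirit of Lemma \ref{lem:full_rank_case}, but to replace the perturbative expansion of $\Lambda(p)$ about $p=0,1$ used there --- which is exactly what fails for singular states --- by an argument that stays uniformly inside $(0,1)$. First I would reduce to a two-point problem: since the risk of an estimator at $\theta_1$ and at $\theta_2$ does not depend on the rest of $\Theta$, optimality of $M$ on $\Theta$ implies optimality of $M$ for the restricted parameter space $\{\theta_1,\theta_2\}$. Writing $\rho_i=\rho(\theta_i)$ and letting $\Pi$ be the projector onto $V:=\mathrm{supp}\,\rho_1+\mathrm{supp}\,\rho_2$, every outcome probability depends only on $\{\Pi M_k\Pi\}$, so I may pass to the Hilbert space $V$; the whole point of this step is that $\rb(p):=p\rho_1+(1-p)\rho_2$ is then positive definite for every $p\in(0,1)$ \emph{without} assuming $\rho_1$ or $\rho_2$ individually invertible --- precisely the case Lemma \ref{lem:full_rank_case} leaves open. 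By Lemma \ref{lem:optimal_for_Bayes}, $M$ is Bayes for the prior $(p,1-p)$ for each $p\in(0,1)$, so Lemma \ref{lem:bayesian_conditions} supplies a Hermitian $\Lambda(p)$ with $\tfrac12\{\Lambda(p),\rb(p)\}=\rb'(p)$, where $\rb'(p)=p\theta_1\rho_1+(1-p)\theta_2\rho_2$; and, exactly as in Lemma \ref{lem:full_rank_case}, condition (i) applied for all $p$ forces every $\Lambda(p)$ to be diagonal in one fixed orthonormal basis $\{\ket{k}\}$ of $V$.

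Next I would set $\Delta\theta=\theta_1-\theta_2\neq0$ and use $\rb'(p)=\theta_2\,\rb(p)+p\,\Delta\theta\,\rho_1=\theta_1\,\rb(p)-(1-p)\,\Delta\theta\,\rho_2$ to rewrite condition (ii) as the pair $\{\Phi(p),\rb(p)\}=2\rho_1$ and $\{\Psi(p),\rb(p)\}=2\rho_2$, where $\Phi(p):=(\Lambda(p)-\theta_2\mathbbm{1})/(p\,\Delta\theta)$ and $\Psi(p):=(\theta_1\mathbbm{1}-\Lambda(p))/((1-p)\,\Delta\theta)$ are again diagonal in $\{\ket{k}\}$, say $\Phi(p)=\sum_k\varphi_k(p)\dyad{k}$ and $\Psi(p)=\sum_k\psi_k(p)\dyad{k}$. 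Writing $a_k=\bra{k}\rho_1\ket{k}\geq0$ and $b_k=\bra{k}\rho_2\ket{k}\geq0$ (so $\bra{k}\rb(p)\ket{k}=pa_k+(1-p)b_k>0$ by positive definiteness of $\rb(p)$ on $V$), the diagonal entries of the two equations give $\varphi_k(p)=a_k/(pa_k+(1-p)b_k)$ and $\psi_k(p)=b_k/(pa_k+(1-p)b_k)$, while dividing their $(k,l)$-entries shows that whenever $\bra{k}\rho_2\ket{l}\neq0$,
\[
\frac{\bra{k}\rho_1\ket{l}}{\bra{k}\rho_2\ket{l}}=\frac{\varphi_k(p)+\varphi_l(p)}{\psi_k(p)+\psi_l(p)}\qquad\text{for all }p\in(0,1).
\]
Since the left-hand side is $p$-independent, the two affine functions $a_kD_l+a_lD_k$ and $b_kD_l+b_lD_k$ (with $D_k=pa_k+(1-p)b_k$) must be proportional; matching their constant and linear coefficients yields $(a_kb_l+a_lb_k)^2=4a_ka_lb_kb_l$, i.e.\ $(a_kb_l-a_lb_k)^2=0$ --- the same ``$x+1/x=2\Rightarrow x=1$'' step as in Lemma \ref{lem:solutions_commute} --- so $a_kb_l=a_lb_k$, and feeding this back gives $a_k\bra{k}\rho_2\ket{l}=b_k\bra{k}\rho_1\ket{l}$ (the symmetric statement holds when $\bra{k}\rho_1\ket{l}\neq0$).

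Assembling this: for every $k\neq l$ we have either $\bra{k}\rho_1\ket{l}=\bra{k}\rho_2\ket{l}=0$, or $a_kb_l=a_lb_k$ together with $a_k\bra{k}\rho_2\ket{l}=b_k\bra{k}\rho_1\ket{l}$. Hence, relative to the equivalence $k\sim l\Leftrightarrow a_kb_l=a_lb_k$ (an equivalence relation because $a_k,b_k\geq0$ and $a_k+b_k>0$), both $\rho_1$ and $\rho_2$ are block-diagonal in $\{\ket{k}\}$, and on each block one of them is a nonnegative scalar multiple of the other (with the degenerate case that one is identically zero). Therefore $[\rho_1,\rho_2]=0$, and reversing the restriction to $V$ shows that $\rho(\theta_1)$ and $\rho(\theta_2)$ commute. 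I expect the main obstacle to be the rank-deficient case itself and little else: once $\rho_1$ or $\rho_2$ is singular, the ansatz $\Lambda(p)=\theta_2\mathbbm{1}+2p\,\Delta\theta\,\Gamma+\mathcal{O}(p^2)$ of Lemma \ref{lem:full_rank_case} is meaningless, since the operator $\Gamma$ solving $\{\Gamma,\rho_2\}=\rho_1$ need not exist; the argument above circumvents this by never letting $p\to0$ or $p\to1$, at the cost of some bookkeeping around vanishing $a_k$ or $b_k$ --- one must verify that $\sim$ is genuinely an equivalence relation and that a block carrying some $a_k=0$ has $\rho_1$ identically zero on it, both of which are immediate from $\rho_1,\rho_2\succeq0$.
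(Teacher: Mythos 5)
Your argument is correct, and it takes a genuinely different route from the paper's for the rank-deficient case. The paper handles singular $\rho(\theta_i)$ by an explicit kernel analysis: it introduces the auxiliary projective measurement $E=\{\Pi_K,\Pi_{K^\perp}\}$ on $K=\ker\rho_2$, uses zero-risk arguments to force the elements of $M$ to respect $K$, takes the $p\to 0$ limit of the Bayes condition to show $\rho_1$ preserves $K$, reduces the Hilbert space to $K^\perp$, repeats for $\rho_1$, and only then invokes the full-rank machinery of Lemmas \ref{lem:full_rank_case} and \ref{lem:solutions_commute} (which itself rests on a perturbative expansion $\Lambda=\theta_2\mathbbm{1}+2p(\theta_1-\theta_2)\Gamma+\mathcal{O}(p^2)$ about the endpoints). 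You instead observe that on the joint support $V$ the mixture $p\rho_1+(1-p)\rho_2$ is automatically positive definite for every $p\in(0,1)$, so Lemma \ref{lem:bayesian_conditions} applies uniformly in the interior of the prior simplex; solving condition (ii) exactly in the fixed eigenbasis supplied by condition (i) and demanding $p$-independence of the off-diagonal ratios then yields commutativity directly. This treats the full-rank and singular cases in one stroke, eliminates both the $p\to 0,1$ limits and the auxiliary measurement $E$, and in effect subsumes Lemmas \ref{lem:full_rank_case} and \ref{lem:solutions_commute} (your coefficient-matching identity $(a_kb_l-a_lb_k)^2=0$ is exactly the ``$x+1/x=2$'' step there, with the graph-connectivity argument replaced by your equivalence relation $\sim$, whose transitivity you correctly ground in $a_k+b_k>0$). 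Two points to be aware of: you inherit, rather than reprove, the paper's assertion in Lemma \ref{lem:full_rank_case} that condition (i) holding for all $p$ yields a single $p$-independent eigenbasis for $\Lambda(p)$ (this is true --- indices $i,j$ whose estimates ever differ must have orthogonally supported $M_i,M_j$ --- but it deserves the same sentence of justification the paper omits); and in the division step you should note explicitly that $\bra{k}\rho_2\ket{l}\neq 0$ forces both $\psi_k+\psi_l\neq0$ and $\bra{k}\bar\rho(p)\ket{l}\neq0$, so the quotient is legitimate. Neither is a gap beyond what the published proof already tolerates.
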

\begin{proof}
	Again, for notational convenience, let $\rho_i=\rho(\theta_i)$ for $i=1,2$. Redefine our parameter space $\Theta = \{\theta_1,\theta_2\}$, but still allow estimators to take any real value. Note, by definition, that $M$ must also be optimal for this parameter estimation problem.
	
	Suppose that $U=\ker(\rho_2)\cap \ker(\rho_1)\neq 0$. Let $U^\perp$ be the orthogonal compliment of $U$ and let $\Pi_{U^\perp}$ be the orthogonal projection matrix onto $U^\perp$. Then note that replacing $M$ with the measurement $\{\Pi_{U^\perp} M_i \Pi_{U^\perp}, \mathbbm{1} - \Pi_{U^\perp} \}$ does not chance any of the measurement probabilities when measuring $\rho_1$ or $\rho_2$. Thus we may WLOG restrict to $U^\perp$, which still has an optimal measurement for this restricted parameter estimation problem.
	
	Suppose that $\rho_2$ is not full rank, i.e. it has some non-trivial kernel $K\leq\mathcal{H}$. Take the measurement $E$ of the two orthogonal projectors $\Pi_K$ and $\Pi_\Kperp$, along with an estimator $\th_E(K) = \theta_1$, $\th_E(\Kperp) = \theta_2$ so that
	\begin{align}
	R(\th_E,\theta_2) &= 0,\\
	R(\th_E,\theta_1) &= (\theta_1-\theta_2)^2(1-\Tr(\Pi_K \rho_1)).
	\end{align} 
	Since $M$ is optimal, we have that $M\preceq E$. Then, in particular, there must be an estimator $\ttd_M$ satisfying $\ttd_M\leq\th_E$. In particular, $\ttd_M$ must have zero risk at $\theta_2$ and thus if $\Tr(\rho_2 M_i)\neq0$, then we must have $\ttd_M(i)=\theta_2$. Let $I=\{i\ |\ \Tr(\rho_2M_i) = 0\}$, then, by the above, $\ttd_M$ must satisfy
	\begin{equation}\label{eqn:ttd_risk}
	R(\ttd_M,\theta_1) \geq (\theta_1-\theta_2)^2 \left(1-\sum_{i\in I}\Tr(\rho_1M_i)\right),
	\end{equation}
	with equality iff. for every $i\in I$, $\ttd_M(i)=\theta_1$. WLOG Assume that this holds, so that we saturate equation \eqref{eqn:ttd_risk}. If $\Tr(\rho_2M_i) = 0$, then, as $M_i\geq 0$, every eigenvector of $M_i$ with non-zero eigenvalue must lie in $K$ and thus $\sum_{i\in I} M_i \leq \Pi_K$. Thus, $R(\ttd_M,\theta_1)\geq R(\th_E,\theta_2)$ with equality iff. $\rho_1\left(\Pi_K - \sum_{i\in I} M_i\right)=0$. Since $\rho_1$ does not kill any vector in $K$ (as we restricted to $U^\perp$) we get equality iff. $\sum_{i\in I} M_i = \Pi_K$. Thus for $i\notin I$ and $\ket{k}\in K$, $M_i\ket{k}=0$.
	
	Now, for $p\in(0,1)$ fix a prior on $\Theta$ where we are given $\rho_1$ with probability $p$ and $\rho_2$ with probability $1-p$. Taking $\Lambda$ corresponding to $M$ by Lemma \ref{lem:optimal_for_Bayes}, we see that $\Lambda$ decomposes as $\Lambda = \Lambda_K + \Lambda_\Kperp$, where $\Lambda_K = \Pi_K \Lambda_K \Pi_K$ and $\Lambda_\Kperp = \Pi_\Kperp \Lambda_\Kperp \Pi_\Kperp$. Furthermore, in order for our estimator to be Bayes, we must always guess $\theta_1$ in the case of a $K$ outcome i.e. $\Lambda_K = \theta_1 \Pi_K$.
	
	Take the inner product of condition $(ii)$ of lemma \ref{lem:bayesian_conditions} with $\ket{k}\in K$ and $\ket{\ell}\in \Kperp$. This gives
	\begin{equation}\label{eqn:kernel_fixed}
	\bra{\ell} \Lambda_\Kperp \rho_1 \ket{k} = \theta_1\bra{\ell}\rho_1\ket{k}.
	\end{equation}
	Note that as $p\to 0$, we must have $\Lambda_\Kperp \to \Pi_\Kperp \theta_2$, as our estimates must approach $\theta_2$ for our estimator to be Bayes. Thus the only way for equation $\eqref{eqn:kernel_fixed}$ to be satisfied for all $p\in(0,1)$ is for $\bra{\ell}\rho_1\ket{k}=0$. Thus $\rho_1$ fixes $K$ and we may decompose it as $\rho_1 = \Pi_K\rho_1\Pi_K + \Pi_\Kperp\rho_1\Pi_\Kperp$. Thus equation \eqref{eqn:optimal_lambda} becomes
	\begin{align}\label{eqn:reduced_lambda}
	&\frac{1}{2} \{ \Lambda_\Kperp, p \Pi_\Kperp \rho_1 \Pi_\Kperp  + (1-p)\Pi_\Kperp \rho_2\Pi_\Kperp\} \nonumber\\
	&= p\theta_1\Pi_\Kperp \rho_1\Pi_\Kperp  + (1-p)\theta_2\Pi_\Kperp \rho_2\Pi_\Kperp .
	\end{align}
	This essentially reduces the Hilbert space to $\Kperp$, on which $\rho_2$ has full rank. We may then repeat the above to assume $\rho_1$ has full rank (note that, as we restricted to $U^\perp$, $\ker(\rho_1)\leq \Kperp$. By Lemmas \ref{lem:full_rank_case} and \ref{lem:solutions_commute}, $\rho_1$ and $\rho_2$ commute on their joint support. But as they fix each other's kernels and are Hermitian, they must, therefore, fully commute.
\end{proof}

\begin{lemma}\label{lem:single_param_case}
	Suppose $\rho(\theta)$ is some single parameter state $(\Theta\subseteq \mathbb{R})$. If there is an optimal measurement $M$ under least-squares loss, then $\rho(\theta)$ is a classical state problem.
\end{lemma}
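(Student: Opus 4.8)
The plan is to leverage Lemma~\ref{lem:states_commute}, which already establishes that $\rho(\theta_1)$ and $\rho(\theta_2)$ commute for every distinct pair $\theta_1,\theta_2\in\Theta$, and to upgrade this pairwise commutativity to the existence of a single, $\theta$-independent eigenbasis. First I would observe that all the operators $\{\rho(\theta):\theta\in\Theta\}$ lie in the finite-dimensional real vector space of Hermitian operators on $\mathcal{H}$, which has dimension $(\dim\mathcal{H})^2$. Hence $\mathrm{span}\{\rho(\theta):\theta\in\Theta\}$ is finite-dimensional, so one can choose finitely many parameters $\theta_1,\dots,\theta_m\in\Theta$ such that $\{\rho(\theta_j)\}_{j=1}^m$ is a spanning set for it.

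Next, the finite family $\rho(\theta_1),\dots,\rho(\theta_m)$ consists of Hermitian (hence diagonalisable) operators that pairwise commute by Lemma~\ref{lem:states_commute}, so it is simultaneously diagonalisable: there is an orthonormal basis $\{\ket{i}\}$ of $\mathcal{H}$ in which every $\rho(\theta_j)$ is diagonal. This is the standard linear-algebra fact, proved by induction on $\dim\mathcal{H}$: pick one of the operators with at least two distinct eigenvalues, decompose $\mathcal{H}$ into its eigenspaces, note that each eigenspace is invariant under all the remaining (commuting) operators, and recurse on each piece; if every operator is a scalar multiple of the identity the claim is immediate. Since every $\rho(\theta)$ is a real linear combination of $\rho(\theta_1),\dots,\rho(\theta_m)$, it too is diagonal in $\{\ket{i}\}$.

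Finally I would set $p_i(\theta)=\bra{i}\rho(\theta)\ket{i}$. Positivity and unit trace of $\rho(\theta)$ give $p_i(\theta)\geq 0$ and $\sum_i p_i(\theta)=1$, so $\rho(\theta)=\sum_i p_i(\theta)\dyad{i}$ with a fixed, $\theta$-independent basis; that is precisely the definition of a classical parametrised state, which is what we must show. The real work in proving the converse was done in Lemma~\ref{lem:states_commute}; the only point requiring any care in this final step is the reduction of the (a priori uncountable) family $\{\rho(\theta)\}$ to a finite spanning subfamily before invoking simultaneous diagonalisation, which is exactly the first step above.
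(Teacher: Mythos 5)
Your proposal is correct and follows essentially the same route as the paper: both rest entirely on Lemma~\ref{lem:states_commute} for pairwise commutativity of the $\rho(\theta)$ and then upgrade this to a single $\theta$-independent eigenbasis. The only (immaterial) difference is in the final linear-algebra step: you pass to a finite spanning subfamily and invoke the standard simultaneous-diagonalisation theorem, whereas the paper iteratively refines the eigenspace decomposition of a fixed $\rho(\theta_1)$ until it stabilises, with termination guaranteed by dimension counting.
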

\begin{proof}
	Fix $\theta_1$ in $\Theta$ and decompose $\mathcal{H}$ as a direct sum of $\rho(\theta_1)$'s (distinct) eigenspaces $\mathcal{H} = \oplus_j V_j$. If this is an eigenspace decomposition of $\rho(\theta)$ for every $\theta\in\Theta$, then by definition $\rho$ is a classical state. Otherwise, take $\theta_2$ such that some $V_j$ is not an eigenspace of $\rho(\theta_2)$. But by Lemma \ref{lem:states_commute}, $[\rho(\theta_1),\rho(\theta_2)]=0$ and thus $V_j$ must decompose a sum of eigenspaces of $\rho(\theta_2)$, $V_j = \oplus V'_j$. Take this new decomposition into eigenspaces. We may repeat this process, noting that it must terminate as each eigenspace must have dimension at least one, to see that $\rho(\theta)$ must be classical. 
\end{proof}

\cto*
\begin{proof}
	Suppose we have a general $\rho(\theta)$ and some optimal measurement $M$. Fix $\theta_1,\theta_2\in\Theta$ and let $T=[0,||\theta_2-\theta_2||]$.By convexity, for $t\in T$, $\gamma(t) = \theta_1 + t(\theta_2 - \theta_1)/||\theta_2 - \theta_1||$ is in $\Theta$. Thus $\rho(\gamma(t))$ gives a single parameter estimation problem - estimating $t$. We have constructed this problem such that the least-squares loss functions agree - that is for $s,t\in T$,
	\begin{equation}
	||\gamma(t)-\gamma(s)||^2 = |t-s|^2.
	\end{equation} 
	But then (by projecting any estimators $\th_M$ onto the line segment between $\theta_1$ and $\theta_2$) $M$ must also be optimal for this problem, and thus, by Lemma \ref{lem:single_param_case}, $\rho(\theta_1)$ and $\rho(\theta_2)$ commute. But then, since $\theta_1$ and $\theta_2$ were arbitrary, by the same argument as Lemma \ref{lem:single_param_case}, we see that $\rho(\theta)$ must be classical.
\end{proof}

We remark that the the condition of convexity in Theorem \ref{thm:converse_to_optimality} can be weakened. By Lemma \ref{lem:bayesian_conditions}, we see that the optimal measurement must be a fine-graining of the projection onto $\rho$'s joint eigenspaces. But then for any two distinct sets on which $\rho$ is classical, we see that it must be classical on the union of these sets too. Then the result holds for $\Theta$ a disjoint union of convex sets too. Moreover, if $\Theta$ is open, around any $\theta\in\Theta$ we can fit a convex set in which, $\rho$ must be classical. But then by the same reasoning as before, we see that $\rho$ must be classical on the whole of $\Theta$. We do not prove this result in full detail, as most parameter spaces of interest are convex.

\section{Approximately Classical Implies an Approximately Optimal Measurement}\label{sec:app_approx_opt}
In this section we provide the remaining two proofs for each of the ``close" to classical implies ``close" to optimal results from Section \ref{sec:approx_opt}. It will be useful to slightly extend our notation for risk functions to include a label for the state we are considering, i.e. we write $R_\rho(\th_M,\theta)$.

We start with the first result:

\aao*
\begin{proof}
	Since $\sigma(\theta)$ is classical, we can fix some optimal measurement $M$. Fix some measurement-estimator pair $(F,\th_F)$ and $\theta\in\Theta$. Then note that
	\begin{align}
	|R_\rho(\th_F,&\theta)-R_\sigma(\th_F,\theta)| \nonumber\\
	&= \bigg|\sum_i \Tr[F_i (\rho(\theta)-\sigma(\theta))]L(\th_F(i),\theta) \bigg|,\\
	&\leq d\sum_i |\Tr[F_i (\rho(\theta)-\sigma(\theta))]|.\label{eqn:eps_ineq}
	\end{align}
	Fix some $F_i\geq 0$. Diagonalising $\rho(\theta)-\sigma(\theta)= \sum_j  \lambda_j\dyad{j}$, note that
	\begin{align}
	|\Tr[F_i (\rho(\theta)-\sigma(\theta))]| &\leq \sum_j |\lambda_j| \bra{j}F_i\ket{j},\\
	&= \Tr( F_i|\rho(\theta)-\sigma(\theta)|).\label{eqn:trace_ineq}
	\end{align}
	Substituting inequality \eqref{eqn:trace_ineq} into \eqref{eqn:eps_ineq}, we see that 
	\begin{equation}
	|R_\rho(\th_F,\theta)-R_\sigma(\th_F,\theta)| \leq d\epsilon \label{eqn:risk_ineq}
	\end{equation}
	Since $M$ is optimal on $\sigma(\theta)$ there exists $\th_M$ such that for all $\theta$, $R_\sigma(\th_M,\theta)\leq R_\sigma(\th_F,\theta)$. Then applying \eqref{eqn:risk_ineq} twice, we see
	\begin{align}
	R_\rho(\th_M,\theta) - R_\rho(\th_F,\theta) &\leq R_\sigma(\th_M,\theta)-R_\sigma(\th_F,\theta) + 2d\epsilon,\\
	&\leq 2d\epsilon.
	\end{align}
\end{proof}

For the multiplicative error, we will make use of the following property of maximum relative entropy,
\begin{lemma}\label{lem:meaus_max_true}
	For two states $\rho,\sigma$
	\begin{equation}
	D_{\max}(\rho||\sigma) = \log\sup_{0\leq M\leq \mathbbm{1}}  \frac{\Tr(M\rho)}{\Tr(M\sigma).}
	\end{equation}
\end{lemma}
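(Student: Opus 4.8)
The plan is to reduce everything to elementary facts about operator inequalities. Write $\lambda := e^{D_{\max}(\rho||\sigma)}$; unpacking the definition $D_{\max}(\rho||\sigma) = \inf\{\gamma : \rho \leq e^\gamma\sigma\}$ and substituting $t = e^\gamma$, this is $\lambda = \inf\{t > 0 : \rho \leq t\sigma\}$. Since $\rho$ is a nonzero state, the set $\{t : t\sigma - \rho \geq 0\}$ is contained in $(0,\infty)$ and is closed (it is the preimage of $[0,\infty)$ under the continuous map $t \mapsto \lambda_{\min}(t\sigma-\rho)$); hence, if it is nonempty, it equals $[\lambda,\infty)$ and in particular $\rho \leq \lambda\sigma$. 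If instead it is empty — equivalently $\mathrm{supp}(\rho)\not\subseteq\mathrm{supp}(\sigma)$ — then $\lambda = +\infty$, but also a unit vector $\ket v \in \mathrm{supp}(\rho)\cap\ker(\sigma)$ gives $\Tr(\dyad v\,\sigma) = 0 < \Tr(\dyad v\,\rho)$, so the supremum on the right is $+\infty$ too; thus from now on I assume the supports nest and $\rho \leq \lambda\sigma$, and I read $\Tr(M\rho)/\Tr(M\sigma)$ as ranging over $M$ with $\Tr(M\sigma) > 0$.

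For the inequality $\log\sup_{0\leq M\leq\mathbbm{1}}\Tr(M\rho)/\Tr(M\sigma) \leq D_{\max}(\rho||\sigma)$: fix any $M$ with $0\leq M\leq\mathbbm{1}$ and $\Tr(M\sigma) > 0$. Since $\lambda\sigma - \rho \geq 0$ and $M \geq 0$, we have $\Tr[M(\lambda\sigma - \rho)] \geq 0$, i.e. $\Tr(M\rho) \leq \lambda\,\Tr(M\sigma)$, so $\Tr(M\rho)/\Tr(M\sigma) \leq \lambda$. Taking the supremum over $M$ and then logarithms gives this direction.

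For the reverse inequality: fix $\epsilon > 0$. Because $\lambda$ is the infimum, $(\lambda-\epsilon)$ is not in the set, so $\rho \not\leq (\lambda-\epsilon)\sigma$, and hence there is a unit vector $\ket v$ with $\bra v\rho\ket v > (\lambda-\epsilon)\bra v\sigma\ket v$. Since $\rho \leq \lambda\sigma$ forces $\bra v\rho\ket v \leq \lambda\bra v\sigma\ket v$, the value $\bra v\sigma\ket v$ cannot be zero; thus $M := \dyad v$ satisfies $0\leq M\leq\mathbbm{1}$ and $\Tr(M\sigma) > 0$, and $\Tr(M\rho)/\Tr(M\sigma) = \bra v\rho\ket v/\bra v\sigma\ket v > \lambda - \epsilon$. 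As $\epsilon > 0$ was arbitrary, $\sup_{0\leq M\leq\mathbbm{1}}\Tr(M\rho)/\Tr(M\sigma) \geq \lambda$, and taking logarithms completes the proof.

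The argument is short, and the only point requiring care — the ``main obstacle'', such as it is — is the bookkeeping around vanishing denominators and non-nesting supports: one must agree on the convention that $\Tr(M\rho)/\Tr(M\sigma) = +\infty$ when the denominator vanishes but the numerator does not (and that the degenerate $0/0$ case is harmless, being dominated by any achievable positive ratio), after which the whole statement rests on the two standard observations that $A \leq B$ for positive operators iff $\Tr(MA)\leq\Tr(MB)$ for all $M\geq 0$, and that rank-one projectors already suffice to witness the failure of such an operator inequality.
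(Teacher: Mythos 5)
The paper does not actually prove this lemma---it defers entirely to the cited reference---so there is nothing internal to compare your argument against; what you have written is the standard self-contained proof, and it is essentially correct. Both directions are sound: $\rho\leq\lambda\sigma$ with $M\geq 0$ gives $\Tr(M\rho)\leq\lambda\Tr(M\sigma)$ via positivity of $\Tr(AB)$ for $A,B\geq 0$, and the failure of $\rho\leq(\lambda-\epsilon)\sigma$ is witnessed by a rank-one projector whose $\sigma$-probability you correctly show cannot vanish. Your care with the closedness of $\{t:\,t\sigma-\rho\geq 0\}$ (so that the infimum is attained and $\rho\leq\lambda\sigma$ genuinely holds) is exactly the point most sketches gloss over.

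One small imprecision in the degenerate case: when $\mathrm{supp}(\rho)\not\subseteq\mathrm{supp}(\sigma)$ you take a unit vector in $\mathrm{supp}(\rho)\cap\ker(\sigma)$, but that intersection can be trivial even when the supports fail to nest --- e.g. in two dimensions with $\rho$ supported on $\mathrm{span}\{\ket{0}+\ket{1}\}$ and $\sigma$ supported on $\mathrm{span}\{\ket{0}\}$. The correct choice is a unit vector $\ket{v}\in\ker(\sigma)\setminus\ker(\rho)$, which exists precisely when $\mathrm{supp}(\rho)\not\subseteq\mathrm{supp}(\sigma)$ (take orthogonal complements of the inclusion $\ker(\sigma)\subseteq\ker(\rho)$), and then $\Tr(\dyad{v}\sigma)=0$ while $\Tr(\dyad{v}\rho)=\bra{v}\rho\ket{v}>0$, which is all you need to make both sides $+\infty$. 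With that one-line repair the proof is complete.
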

We omit the proof, it is given in \cite{Moso2013}.

Using this, we can prove the closeness result for multiplicative approximate optimality
\amo*
\begin{proof}
	Fix some measurement $F$, note that
	\begin{align}
	\frac{R_\rho(\th_F,\theta)}{R_\sigma(\th_F,\theta)} &= \frac{\sum _i \Tr(\rho(\theta)F_i)L(\th_F(i,\theta))}{\sum _i \Tr(\sigma(\theta)F_i)L(\th_F(i,\theta))},\\
	&\leq \max_i\frac{\Tr(\rho(\theta)F_i)}{\Tr(\sigma(\theta)F_i)},\\
	&\leq e^{D_{\max}(\rho || \sigma)},\label{eqn:mul_result}
	\end{align}
	where we have used Lemma \ref{lem:meaus_max_true}. But then fixing an optimal measurement for $\sigma$ and using the inequality \eqref{eqn:mul_result} twice, the result follows.
\end{proof}

\section{Admissibility of Measurements}\label{sec:app_admis_of_meas}
The aim of this section is to prove that the two classes of measurements discussed in Section \ref{sec:inad_of_meaus} are inadmissible.

To begin, we must prove a series of technical results about Bregman divergences and Bayesian estimation (see Appendix \ref{sec:Bayesian}).There are two properties of Bregrman divergences that we will need, stated below in Lemma \ref{lem:Breg_props}. We will not prove them, instead referring the reader to \cite{Bane2005}.

\begin{lemma}\label{lem:Breg_props}
	Let $L$ a Bregman divergence. Then 
	\begin{enumerate}[label=(\roman*)]
		\item{
			L is strictly convex in its first argument.
		}
		\item{
			For any prior $\pi$ on $\Theta$, the Bayes estimator $\tb$ is unique and is given by the posterior mean $\tb(x) = \mathbb{E}_{\pi}[\theta|\text{outcome } x]$.
		}
	\end{enumerate}
\end{lemma}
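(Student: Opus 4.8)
The plan is to dispatch the two parts separately; both are short. Part~(i) is just the definition: for fixed $\theta_2$ the map $\theta_1\mapsto B(\theta_1,\theta_2)=g(\theta_1)-[\nabla g(\theta_2)]^{T}\theta_1+\big([\nabla g(\theta_2)]^{T}\theta_2-g(\theta_2)\big)$ is $g$ plus an affine function of $\theta_1$, and adding an affine function preserves strict convexity, so $B(\cdot,\theta_2)$ is strictly convex because $g$ is. (This incidentally confirms that every Bregman loss meets the convexity hypothesis of Theorem~\ref{thm:clas_optimal}.)

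For part~(ii) the first move is to reduce minimisation of the Bayes risk to a pointwise problem. With $m(x)=\int d\theta\,\pi(\theta)P_\theta(x)$ the marginal probability of outcome $x$ and $\pi(\theta\mid x)=\pi(\theta)P_\theta(x)/m(x)$ the posterior, Fubini gives
\begin{equation}
R_\pi(\th)=\int d\theta\,\pi(\theta)\sum_x P_\theta(x)\,L(\th(x),\theta)=\sum_x m(x)\int d\theta\,\pi(\theta\mid x)\,L(\th(x),\theta),
\end{equation}
a sum of non-negative terms whose summands can be minimised independently; hence $R_\pi$ is minimised by choosing, at each outcome $x$ with $m(x)>0$, the estimate $\th(x)$ that minimises $a\mapsto\int d\theta\,\pi(\theta\mid x)\,L(a,\theta)$. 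Here one reads the Bregman loss as $L(a,\theta)=B(\theta,a)$, i.e.\ with the estimate occupying the reference argument of $B$ (the orientation is immaterial when $B$ is symmetric, e.g.\ for least squares), so the Bayes rule at $x$ minimises $a\mapsto\mathbb{E}_{\vartheta\sim\pi(\cdot\mid x)}[B(\vartheta,a)]$.

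It then suffices to show: for any distribution $\mu$ on $\Theta$ with mean $\bar\theta=\mathbb{E}_{\mu}[\vartheta]$, the unique minimiser of $a\mapsto\mathbb{E}_{\mu}[B(\vartheta,a)]$ is $a=\bar\theta$. The engine is the ``law of total Bregman divergence,''
\begin{equation}
\mathbb{E}_{\mu}[B(\vartheta,a)]=\mathbb{E}_{\mu}[B(\vartheta,\bar\theta)]+B(\bar\theta,a),
\end{equation}
which holds because every term of $B(\vartheta,a)$ other than $g(\vartheta)$ is affine in $\vartheta$: taking the expectation replaces $\vartheta$ by $\bar\theta$ except there, and the residual $\mathbb{E}_{\mu}[g(\vartheta)]-g(\bar\theta)$ is exactly $\mathbb{E}_{\mu}[B(\vartheta,\bar\theta)]$. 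Since $g$ is convex, $B(\bar\theta,a)\geq0$, and since it is \emph{strictly} convex, $B(\bar\theta,a)=0$ forces $a=\bar\theta$; so the right-hand side is minimised uniquely at $a=\bar\theta$. Taking $\mu=\pi(\cdot\mid x)$ gives $\tb(x)=\mathbb{E}_{\pi}[\theta\mid x]$.

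I do not anticipate a real obstacle; the substance is the displayed identity together with the Fubini reduction, both routine. The points needing a little care are bookkeeping: uniqueness of the Bayes estimator only holds up to a set of outcomes of probability zero (the choice at null outcomes does not affect $R_\pi$); one should assume $\theta$ is $\pi$-integrable so the posterior mean exists; and the mean is optimal only for the orientation in which the estimate is the Bregman reference point, which matters for asymmetric losses such as the Kullback--Leibler divergence. Up to these remarks the statement is exactly the characterisation of conditional expectation as a Bregman predictor established in \cite{Bane2005}.
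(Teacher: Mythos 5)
Your proof is correct, but note that the paper does not prove this lemma at all: it explicitly defers to the reference \cite{Bane2005}, so there is no in-paper argument to compare against. What you have written is essentially a self-contained reconstruction of the argument in that reference: part (i) from ``strictly convex plus affine,'' and part (ii) from the Fubini reduction to a pointwise problem followed by the Bregman--Pythagorean identity $\mathbb{E}_\mu[B(\vartheta,a)]=\mathbb{E}_\mu[B(\vartheta,\bar\theta)]+B(\bar\theta,a)$, with strict convexity of $g$ giving uniqueness. All steps check out. The one substantive point you raise deserves emphasis rather than a parenthetical: with the paper's literal convention that the loss is $L(\hat{\theta},\theta)=B(\hat{\theta},\theta)$ (guess in the first, non-reference slot), part (ii) is \emph{false} for asymmetric Bregman divergences --- the Bayes estimator is then $(\nabla g)^{-1}\big(\mathbb{E}_\pi[\nabla g(\theta)\mid x]\big)$, e.g.\ a geometric rather than arithmetic mean for the Kullback--Leibler loss. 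The posterior-mean characterisation requires the orientation $L(\hat{\theta},\theta)=B(\theta,\hat{\theta})$, which you adopt; the two orientations coincide for least squares, the only case the paper uses quantitatively downstream, and the qualitative consequences the paper extracts (uniqueness of the Bayes estimator, and its non-constancy under the hypothesis of Lemma \ref{lem:non-const_estim}) survive in either orientation. Your bookkeeping caveats (uniqueness only up to outcomes of marginal probability zero, integrability of $\theta$ under $\pi$) are also correct and worth keeping.
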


Next we prove some technical lemmas to do with Bayesian estimation.

\begin{lemma}\label{lem:non-const_estim}
	Let $\Theta\subset\mathbb{R}^p$ and $L$ be a Bregman divergence. Suppose that $\{p_{i}(\theta)\}$ is a distribution depending on $\theta$ and that there exist $\theta_1,\theta_2\in\Theta$ such that $p_1(\theta_1)p_2(\theta_2)\neq p_2(\theta_1)p_1(\theta_2)$. Then, there exists an admissible estimator $\th$ where $\th(1)\neq\th(2)$.
\end{lemma}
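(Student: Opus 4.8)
The plan is to produce the required estimator as a \emph{unique Bayes estimator} for a carefully chosen prior, and then appeal to Lemma~\ref{lem:uniq_bayes_ad} (a unique Bayes estimator is admissible) together with Lemma~\ref{lem:Breg_props}(ii) (for a Bregman divergence the Bayes estimator exists, is unique, and is the posterior mean). First I would record the consequences of the hypothesis $p_1(\theta_1)p_2(\theta_2)\neq p_2(\theta_1)p_1(\theta_2)$: it forces $\theta_1\neq\theta_2$ (else both sides coincide), and it forces $p_1(\theta_1)+p_1(\theta_2)>0$ and $p_2(\theta_1)+p_2(\theta_2)>0$ (else one product vanishes against the other). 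Moreover any outcome $j$ with $p_j(\theta)=0$ for all $\theta\in\Theta$ never occurs and affects no risk, so we may discard such outcomes; after this, outcomes $1$ and $2$ remain, and every surviving outcome $j$ has a \emph{witness} $\theta^{(j)}\in\Theta$ with $p_j(\theta^{(j)})>0$.

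Next I would build the prior. Fix $\alpha\in(0,1)$ and $\epsilon\in(0,1)$ and set
\[
\pi_\epsilon=(1-\epsilon)\big(\alpha\,\delta_{\theta_1}+(1-\alpha)\,\delta_{\theta_2}\big)+\epsilon\,\nu ,
\]
where $\nu$ is any prior on $\Theta$ assigning positive probability to a witness $\theta^{(j)}$ of every surviving outcome $j$ (for a measurement with countably many outcomes such a $\nu$ plainly exists; one may also just take $\nu$ with full support when $\Theta$ permits). For $\epsilon>0$, every outcome then has strictly positive marginal probability under $\pi_\epsilon$, so by Lemma~\ref{lem:Breg_props}(ii) the Bayes estimator $\th^\epsilon$ for $\pi_\epsilon$ is unique and equals the posterior mean at each outcome; by Lemma~\ref{lem:uniq_bayes_ad} it is admissible on all of $\Theta$.

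It remains to pick $\epsilon$ so that $\th^\epsilon(1)\neq\th^\epsilon(2)$. As $\epsilon\to 0^+$, the ratio defining the posterior mean has denominator bounded away from $0$ (by the first paragraph), so $\th^\epsilon(i)$ converges continuously to the two-point posterior mean
\[
\th^0(i)=\frac{\alpha p_i(\theta_1)\theta_1+(1-\alpha)p_i(\theta_2)\theta_2}{\alpha p_i(\theta_1)+(1-\alpha)p_i(\theta_2)}=w_i\theta_1+(1-w_i)\theta_2,\qquad w_i=\frac{\alpha p_i(\theta_1)}{\alpha p_i(\theta_1)+(1-\alpha)p_i(\theta_2)} .
\]
A short computation gives $\th^0(1)-\th^0(2)=(w_1-w_2)(\theta_1-\theta_2)$ with $w_1-w_2$ a positive multiple of $\alpha(1-\alpha)\big(p_1(\theta_1)p_2(\theta_2)-p_2(\theta_1)p_1(\theta_2)\big)$, which is nonzero by hypothesis; since $\theta_1\neq\theta_2$ this yields $\th^0(1)\neq\th^0(2)$. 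Hence for all sufficiently small $\epsilon>0$ we still have $\th^\epsilon(1)\neq\th^\epsilon(2)$, and $\th:=\th^\epsilon$ is the desired admissible estimator.

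The main obstacle — and the reason for perturbing a two-point prior rather than simply taking a full-support prior outright — is reconciling two competing requirements. Genuine uniqueness of the Bayes estimator, which is what Lemma~\ref{lem:uniq_bayes_ad} needs to give admissibility on \emph{all} of $\Theta$, forces the prior to charge every outcome; but the separation $\th(1)\neq\th(2)$ is only transparent for a prior supported on $\{\theta_1,\theta_2\}$, where the explicit formula above makes the hypothesis bite (for a generic prior the two posterior means could coincide). The small $\epsilon$-perturbation of the two-point prior threads this needle. Minor loose ends to tidy are the case of uncountably many measurement outcomes and ensuring $\nu$ has a well-defined mean when $\Theta$ is unbounded, both of which are routine.
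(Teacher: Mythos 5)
Your proof is correct and follows essentially the same route as the paper's: a prior concentrated on $\{\theta_1,\theta_2\}$, the posterior-mean Bayes estimator from Lemma~\ref{lem:Breg_props}(ii), admissibility via Lemma~\ref{lem:uniq_bayes_ad}, and the observation that the cross-product hypothesis $p_1(\theta_1)p_2(\theta_2)\neq p_2(\theta_1)p_1(\theta_2)$ is precisely the condition for the two posterior means to differ. Your $\epsilon$-perturbation of the two-point prior is a refinement the paper omits---the paper applies the uniform prior on $\{\theta_1,\theta_2\}$ directly and implicitly takes the Bayes estimator to be unique even at outcomes that occur with probability zero under $\theta_1$ and $\theta_2$---so your version tidies a small uniqueness gap while keeping the same core argument.
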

\begin{proof}
	Define a Prior on $\Theta$ of uniformly random choice of $\theta_1$ or $\theta_2$, that is to say,
	\begin{equation}
	\mathbb{P}(Y=\theta) = \begin{cases}
	1/2,	&\theta\in\{\theta_1,\theta_2\},\\
	0,	& \text{o.w.}
	\end{cases}
	\end{equation}
	In this case, by Lemma \ref{lem:Breg_props}, we know that 
	\begin{align}
	\tb(i) &= \mathbb{E}_{\pi}[\theta|i],\\
	&= \mathbb{P}(\theta_1|i)\theta_1 + \mathbb{P}(\theta_2|i)\theta_2. 
	\end{align}
	Since this is a convex combination of $\theta_1$ and $\theta_2$ which must be distinct to satisfy the conditions of the Lemma, we deduce that $\tb(1)=\tb(2)$ iff. $\mathbb{P}(\theta_1| 1) = \mathbb{P}(\theta_1 | 2)$.
	But
	\begin{equation}
	\mathbb{P}(\theta_1|i) = \frac{p_i(\theta_1)}{p_i(\theta_1)+p_i(\theta_2)}.
	\end{equation}
	So
	\begin{align}
	\mathbb{P}(\theta_1| 1) = \mathbb{P}(\theta_1 | 2) &\Leftrightarrow \frac{p_1(\theta_1)}{p_1(\theta_1)+p_1(\theta_2)} = \frac{p_2(\theta_1)}{p_2(\theta_1)+p_2(\theta_2)}, \\
	&\Leftrightarrow p_1(\theta_1)p_2(\theta_2) = p_2(\theta_1)p_1(\theta_2).
	\end{align}
	Thus by the assumptions of the Lemma, $\tb(1)\neq\tb(2)$. But by Lemma \ref{lem:uniq_bayes_ad},
	$\tb$ is admissible.
\end{proof}

\begin{lemma}\label{lem:same_inadmiss}
	Let $L$ be a Bregman divergence. Suppose that $\{p_{i}(\theta)\}$ is a distribution depending on $\theta$ and that $\th,\ttd$ are two distinct estimators with the same risk function. Then $\th$ and $\ttd$ are inadmissible unless they are equal with probability 1.
\end{lemma}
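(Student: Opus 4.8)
The plan is to average the two estimators and exploit the strict convexity of Bregman divergences recorded in Lemma \ref{lem:Breg_props}(i). Since a Bregman divergence lives on a convex set $\Theta \subseteq \mathbb{R}^p$, the pointwise midpoint $\bar\theta(i) = \tfrac12\bigl(\th(i) + \ttd(i)\bigr)$ again takes values in $\Theta$, so $\bar\theta$ is a legitimate estimator. First I would record the pointwise inequality: for every outcome $i$ and every $\theta \in \Theta$, strict convexity of $L(\cdot,\theta)$ gives $L(\bar\theta(i),\theta) \le \tfrac12 L(\th(i),\theta) + \tfrac12 L(\ttd(i),\theta)$, with equality if and only if $\th(i) = \ttd(i)$.

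Next I would weight by the outcome probabilities $p_i(\theta)$ and sum, obtaining $R(\bar\theta,\theta) \le \tfrac12 R(\th,\theta) + \tfrac12 R(\ttd,\theta)$ for all $\theta$. Because $\th$ and $\ttd$ are assumed to have the same risk function, the right-hand side equals $R(\th,\theta) = R(\ttd,\theta)$, so $\bar\theta$ is everywhere at least as good as both estimators.

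Finally, to upgrade this to strict domination at some parameter value, I would invoke the hypothesis that $\th$ and $\ttd$ are not equal with probability $1$: there is a parameter $\theta_0 \in \Theta$ and an outcome $j$ with $p_j(\theta_0) > 0$ and $\th(j) \ne \ttd(j)$. At the pair $(\theta_0, j)$ the convexity inequality is strict, and since $p_j(\theta_0) > 0$ this strictness is not washed out by the weighted sum, so $R(\bar\theta,\theta_0) < R(\th,\theta_0) = R(\ttd,\theta_0)$. Hence $\bar\theta$ dominates both $\th$ and $\ttd$, which are therefore inadmissible. The only subtlety — and the reason for the exception in the statement — is the distinction between estimators that are distinct as functions and estimators that differ only on outcomes having probability zero under every $P_\theta$; in the latter case the averaging trick produces no strict gain, matching the ``unless they are equal with probability 1'' clause. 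Beyond this bookkeeping there is no real obstacle.
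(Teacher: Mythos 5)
Your proposal is correct and is exactly the paper's argument: the paper's proof is the one-line "let $\hat{\theta}'(i) = [\hat{\theta}(i)+\tilde{\theta}(i)]/2$; by strict convexity and distinctness the result follows," and you have simply filled in the details, including the correct handling of the "equal with probability 1" caveat via an outcome of positive probability where the estimators differ.
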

\begin{proof}
	Let $\th'(i) = [\th(i)+\ttd(i)]/2$. By strict convexity of $L$ in its first argument and since $\th,\ttd$ are distinct, the result follows.
\end{proof}

\begin{lemma}\label{lem:const_admissible}
	Let $L$ be a Bregman divergence and $\{p_i\}$ be some distribution independent of some parameter $\theta$. Then an estimator $\th$ is admissible iff. it is constant with probability 1.
\end{lemma}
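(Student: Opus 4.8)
The plan is to prove both directions of the ``iff'' separately, exploiting the Bregman structure $L(\theta_1,\theta_2) = g(\theta_1) - g(\theta_2) - [\grad g(\theta_2)]^T(\theta_1-\theta_2)$ together with two elementary facts: $L \ge 0$ always, and $L(\theta_1,\theta_2) = 0 \iff \theta_1 = \theta_2$ (the latter from strict convexity of $g$), both of which are immediate from the definition and Lemma~\ref{lem:Breg_props}. The one structural simplification that makes everything work is that, since $\{p_i\}$ is $\theta$-independent, the risk of any estimator is just $R(\th,\theta) = \sum_i p_i L(\th(i),\theta)$, a fixed convex combination over the outcomes.

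For the ``if'' direction (constant with probability $1$ implies admissible), suppose $\th(i) = c$ for every $i$ with $p_i > 0$, so that $R(\th,\theta) = L(c,\theta)$. If some $\ttd$ dominated $\th$, then evaluating at $\theta = c$ forces $0 \le R(\ttd,c) \le R(\th,c) = L(c,c) = 0$; since each summand $p_i L(\ttd(i),c)$ is nonnegative, every one vanishes, hence $\ttd(i) = c$ whenever $p_i > 0$. But then $\ttd$ has exactly the same risk function as $\th$, contradicting the strict-inequality clause in the definition of domination. So $\th$ is admissible.

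For the ``only if'' direction I would argue the contrapositive: if $\th$ is not constant with probability $1$, it is inadmissible. Put $c := \sum_i p_i \th(i)$, which lies in $\Theta$ by convexity, and let $\ttd \equiv c$ be the corresponding constant estimator. The key observation is that averaging the Bregman divergence at the barycentre $c$ exactly annihilates the linear term, so the risk gap is constant in $\theta$:
\begin{equation*}
R(\th,\theta) - R(\ttd,\theta) = \sum_i p_i g(\th(i)) - g\!\left(\sum_i p_i \th(i)\right) - [\grad g(\theta)]^T\!\Big(\sum_i p_i\th(i) - c\Big) = \sum_i p_i g(\th(i)) - g(c).
\end{equation*}
By the finite Jensen inequality this is $\ge 0$, and by \emph{strict} convexity of $g$ it is $> 0$ precisely because the values $\{\th(i) : p_i > 0\}$ are not all equal. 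Hence $\ttd$ strictly dominates $\th$ at \emph{every} $\theta$, so $\th$ is inadmissible.

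I expect no genuine obstacle here: the proof is short, and the only points needing a word of care are that $c \in \Theta$ so $\ttd$ is a legitimate competing estimator (convexity of $\Theta$), and that in the ``if'' direction it is the diagonal value $\theta = c$ that pins $\ttd$ down to equal $\th$ on the support. The real content is just the cancellation of the $\grad g(\theta)$ term at the barycentre, which is what renders the risk difference independent of $\theta$.
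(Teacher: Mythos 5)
Your proof is correct and rests on the same central idea as the paper's: compare $\th$ against the constant estimator at the barycentre $\theta_0=\sum_i p_i\th(i)$ and invoke Jensen. The execution differs in two worthwhile ways. First, the paper only proves the ``only if'' half (via the inequality $R(\th,\theta)=\sum_i p_i L(\th(i),\theta)\geq L(\theta_0,\theta)$, using convexity of $L$ in its first argument, and then handling the equality case by the midpoint argument of Lemma~\ref{lem:same_inadmiss}); you instead exploit the explicit Bregman form so that the $\grad g(\theta)$ term cancels at the barycentre, making the risk gap the $\theta$-independent quantity $\sum_i p_i g(\th(i))-g(\theta_0)$, which is strictly positive by strict Jensen whenever $\th$ is non-constant on the support. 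This buys you strict domination at \emph{every} $\theta$ in one step and removes any appeal to Lemma~\ref{lem:same_inadmiss}. Second, you actually prove the ``if'' direction (a constant estimator is admissible via the zero-risk pin at $\theta=c$), which the paper's proof silently omits even though the lemma is stated as an equivalence. The only point to flag, shared with the paper, is the implicit assumption that the relevant constants lie in $\Theta$: your ``if'' argument evaluates the risk at $\theta=c$, so it needs $c\in\Theta$ (a constant estimator guessing outside $\Theta$ can be dominated, e.g.\ by its projection onto $\Theta$ under least squares), and your ``only if'' argument needs $\theta_0\in\Theta$ (or at least in the domain of $g$), which convexity of $\Theta$ supplies. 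Neither caveat affects the lemma as it is used downstream.
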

\begin{proof}
	Consider the constant estimator $\theta_0 = \sum_i p_i\th(i)$, which is well defined by the assumption that $p_k^{(F)}$ is constant. Then
	\begin{align}
	R(\th,\theta) &= \sum_i p_i L(\th(i),\theta)\\
	&\geq L(\theta_0,\theta) = R(\theta_0,\theta). \label{eqn:const_est}
	\end{align}
	By lemma \ref{lem:same_inadmiss}, $\th$ is admissible only if it equals $\theta_0$ with probability 1.
\end{proof}

We can now prove the two Lemmas from Section \ref{sec:inad_of_meaus}. First, we deal with refineable measurements.  Recall their definition: suppose that we have some parameter estimation problem $\rho(\theta)$ and a measurement $F=\{F_i^{\dag}F_i\}_{i=1}^I$. In the case of outcome $i$ the post measurement state is
\begin{equation}
\rho_i = \frac{F_i\rho F_i^\dag}{\Tr(F_i\rho F_i^\dag)}.
\end{equation}
Recall that a measurement $F=\{F_i^{\dag}F_i\}_{i=1}^I$ is called refineable if one of the possible post measurement states (WLOG the first) still depends on $\theta$. That is there are $\theta_1,\theta_2\in\Theta$ such that
\begin{enumerate}[label=(\roman*)]
	\item{
		$\rho_1(\theta_1)\neq\rho_1(\theta_2).$
	}
	\item{
		$p_1(\theta_1),\, p_1(\theta_2)\neq 0$.
	}
\end{enumerate}
The second of these conditions is to ensure that outcome one is possible at $\theta_1$ and $\theta_2$. Otherwise, the post measurement state, as defined above, is not well-defined. We can now prove the first result on inadmissibility from the main text.

\rfin*
\begin{proof}
	Take $\theta_1,\theta_2$ satisfying the conditions above. Since $\rho_1(\theta_1)\neq\rho_1(\theta_2)$, by the Helstrom Bound \cite{Hels1969}, there is some measurement $M={M_1,M_2}$ where $p_1^{(M)}(\theta_1)\neq p_1^{(M)}(\theta_2)$. Consider the measurement $MF$ corresponding to measuring first with $F$ and in the case of outcome 1, measuring again but now using $M$. That is $MF=\{F_1^{\dag}M_1^{\dag}M_1F_1,F_1^{\dag}M_2^{\dag}M_2F_1\}\cup \{F_i^{\dag}F_i\}_{i=2}^I$. We label the outcomes of this measurement as $(1,1),(2,1)$ and $i$ for $i=2,\dots,I$ in the obvious way. Note that for any estimator $(F,\th)$ we can construct an estimator $(MF, \th')$ by $\th'(1,1)=\th'(1,2) = \th(1)$ and for $i\geq 2$, $\th'(i)=\th(i)$. By construction $R(\th,\theta)\equiv R(\th',\theta)$ and thus $MF\preceq F$. It remains to show $F\nprec MF$, which we show by contradiction. 
	
	Suppose we find an $MF$-admissible estimator $\th$ where $\th(1,1)\neq\th(1,2)$. If there is an estimator $(F, \ttd)$ such that $(F, \ttd)\leq(MF, \th)$, then by the above construction $(MF,\ttd')\leq(MF,\th)$. But by construction, $\ttd'\neq\th$, and so by lemma \ref{lem:same_inadmiss} this contradicts admissibility of $\th$ and hence $F\nprec MF$. Thus the problem reduces to finding such an estimator $\th$.
	
	Note that by lemma \ref{lem:non-const_estim} it is sufficient to show $p_{(1,1)}^{(MF)}(\theta_1)p_{(1,2)}^{(MF)}(\theta_2)\neq p_{(1,2)}^{(MF)}(\theta_1)p_{(1,1)}^{(MF)}(\theta_2)$. But by expanding probabilities and dividing through by $p_1^{(F)}(\theta_1)p_1^{(F)}(\theta_2)$, we see that
	\begin{align}
	&p_{(1,1)}^{(MF)}(\theta_1)p_{(1,2)}^{(MF)}(\theta_2)= p_{(1,2)}^{(MF)}(\theta_1)p_{(1,1)}^{(MF)}(\theta_2),\\
	&\Leftrightarrow p_1^{(M)}(\theta_1)p_2^{(M)}(\theta_2)=p_2^{(M)}(\theta_1)p_1^{(M)}(\theta_2),\\
	&\Leftrightarrow p_1^{(M)}(\theta_1) = p_1^{(M)}(\theta_2). \label{eqn:final_eq}
	\end{align}
	But by the definition of $M$ the final equality \eqref{eqn:final_eq} cannot hold. Thus we can apply lemma \ref{lem:non-const_estim} and a desired $\th$ exists.
\end{proof}

Next, we consider the case when a measurement extracts no information from a system.
\niin*
\begin{proof}
	Since $\rho$ is not constant, take $\theta_1,\theta_2\in\Theta$ such that $\rho(\theta_1)\neq\rho(\theta_2)$. Again, by the Helstrom bound, \cite{Hels1969}
	, there is some measurement $M=\{M_1, M_2\}$ where $p_1^{(M)}(\theta_1)\neq p_1^{(M)}(\theta_2)$. Then, note that
	\begin{align}
	&p_1^{(M)}(\theta_1)p_2^{(M)}(\theta_2)=p_2^{(M)}(\theta_1)p_1^{(M)}(\theta_2)\\ &\Leftrightarrow p_1^{(M)}(\theta_1) = p_1^{(M)}(\theta_2).
	\end{align}
	So by lemma \ref{lem:non-const_estim} there exists a non-constant admissible estimator $\th$. Applying Lemma \ref{lem:const_admissible}, we see $M\preceq F$ but $F\npreceq M$.
\end{proof}

\end{document}